\documentclass[final,3p,times,authoryear]{elsarticle}

\RequirePackage[OT1]{fontenc}
\usepackage{amsthm,amsmath,natbib}
\RequirePackage[colorlinks,citecolor=blue,urlcolor=blue]{hyperref}
\usepackage{amssymb}
\usepackage{graphicx,overpic,amsmath}
\usepackage{bm}
\usepackage{upgreek}
\usepackage{url}
\usepackage{color}
\usepackage[overload]{textcase}
\usepackage[ruled]{algorithm2e}
\usepackage{makecell}
\usepackage{natbib}
\usepackage{multirow}

\usepackage{amsthm}
\usepackage{amsmath}
\usepackage{amssymb}  
\usepackage{dsfont}   
\usepackage{multirow} 
\usepackage{tabularx} 
\usepackage{booktabs} 
\usepackage{pdflscape} 

\numberwithin{equation}{section}

\usepackage{graphicx,psfrag,epsf}
\usepackage{enumerate}
\usepackage{natbib}
\usepackage{url} 

\newtheorem{theorem}{Theorem}[section]
\newtheorem{lemma}{Lemma}[section]

\newtheorem{example}{Example}

\newcommand{\ba}{\begin{array}}
\newcommand{\ea}{\end{array}}
\newcommand{\bt}{\begin{tabular}}
\newcommand{\et}{\end{tabular}}
\newcommand{\btb}{\begin{table}}
\newcommand{\etb}{\end{table}}
\newcommand{\bc}{\begin{center}}
\newcommand{\ec}{\end{center}}
\newcommand{\bea}{\begin{eqnarray}}
\newcommand{\eea}{\end{eqnarray}}
\newcommand{\Bea}{\begin{eqnarray*}}
\newcommand{\Eea}{\end{eqnarray*}}
\newcommand{\beq}{\begin{equation}}
\newcommand{\eeq}{\end{equation}}

\newcommand{\p}{{\rm I}\kern-0.18em{\rm P}}
\newcommand{\1}{{\rm 1}\kern-0.24em{\rm I}}
\newcommand{\E}{{\rm I}\kern-0.18em{\rm E}}
\newcommand{\R}{{\rm I}\kern-0.18em{\rm R}}

\def \bfm#1{\mbox{\boldmath$#1$}}
\def \es2{$E(s^2)$}

 \def \x {{\bfm x}}

\def \R {{\bfm R}}

\def \0{{\bf 0}}

 \def \s2{{\sigma^2}}

\begin{document}

\begin{frontmatter}



\title{Model-free Feature Screening via Revised Chatterjee’s Rank Correlation for Ultra-high Dimensional Censored Data} 


\author[label1]{Shuya Chen} 

\author[label2]{Heng  Peng} 

\affiliation[label1]{organization={Guangdong Provincial / Zhuhai Key Laboratory of Interdisciplinary Research and Application for Data Science, Faculty of Science and Technology, Beijing Normal-Hong Kong Baptist University},
            city={Zhuhai},
            state={Guangdong Province},
            country={China}}

\author{Min Zhou\corref{cor1}\fnref{label1}}
\ead{minzhou@bnbu.edu.cn}
\cortext[cor1]{Corresponding author}

\affiliation[label2]{organization={Department of  Mathematics, Hong Kong Baptist University},
            state={Hong Kong},
            country={China}}

\begin{abstract}
In large-scale biomedical research, it's common to gather ultra-high dimensional data that includes right-censored survival times. 
Feature screening has emerged as a crucial statistical technique for handling such data. In this paper, we introduce a straightforward and robust feature screening approach, leveraging the modified Chatterjee's rank correlation, suitable for a broad range of survival models. With reasonably mild regularity assumptions, we establish the properties of sure screening and ranking consistency. The computation involved in our proposed method is quite direct and simple. Through simulation studies and real gene expression data analysis, we demonstrate the superior efficacy of our proposed approach.
\end{abstract}



\begin{keyword}
Modified Chatterjee's rank correlation, Right censored data, Feature screening, Sure screening property, Ultra-high dimensional data


\end{keyword}

\end{frontmatter}




\section{Introduction}\label{sec:intro}
As data acquisition techniques continue to advance rapidly, ultra-high dimensional data are increasingly common across various scientific disciplines, including genetic engineering, finance, public health, molecular biology, clinical genomics, and brain imaging research, among others. It's not uncommon to encounter millions of potential predictors related to the response in a typical genetic microarray study, while the sample size remains in the hundreds. To create a more parsimonious model, penalization approaches such as the least absolute shrinkage and selection operator (LASSO) \citep{tibshirani1996regression}, smoothly clipped absolute deviation (SCAD) \citep{fan2001variable},  the adaptive LASSO \citep{zou2006adaptive}, the Group LASSO \citep{yuan2006model}, the Dantzig selector \citep{candes2007dantzig}, and the minimax concave penalty (MCP) \citep{zhang2010nearly} have been extensively explored over the past two decades. However, the computational complexity inherent in these methods makes them difficult to apply directly to data where the dimensionality $p$ increases at a exponential rate relative to the sample size $n$ (e.g., $p=\exp \left(n^\alpha\right)$ with $\alpha>0$). These methods face the simultaneous challenges of computational efficiency, statistical accuracy, and algorithmic stability \citep{fan2009ultrahigh}.

One  simple and effective  method that addresses the aforementioned trio of challenges, is the marginal independent learning \citep{fan2008sure}. The sure independence screening (SIS) method  was  proposed by \cite{fan2008sure}  in linear regression by ranking the importance of each candidate variable by the marginal Pearson correlation with the response. They further demonstrated that this method possesses a sure screening property--meaning that all significant features are identified with a probability approaching 1. Motivated by \cite{fan2008sure}, a substantial amount of research has been focused on the study of feature screening in recent years, primarily due to its computational efficiency. See   \cite{fan2011nonparametric}, \cite{fan2014nonparametric}, \cite{fan2009ultrahigh}, \cite{fan2010sure},   \cite{hall2009using}, \cite{li2012robust}, \cite{li2012feature},  \cite{liu2014feature}, \cite{wang2009forward},  \cite{xu2014sparse},  \cite{zhong2016regularized},  \cite{zhou2023bolt}, \cite{zhu2011model}, \cite{yang2016feature}, \cite{yang2018feature} among many others.

In many large-scale biomedical studies, the presence of a response that is time to event and right censored, introduces additional complexities and significant challenges when performing feature screening on ultra-high dimensional data.  Some works were published to discuss this issue.  To name a few, 
\cite{fan2010high} extended the Sure Independence Screening (SIS) to the Cox Proportional Hazards model with an iterative version available. \cite{zhao2012principled} developed the principled sure independence screening (PSIS) based on the standardized marginal maximum partial likelihood estimators for the Cox model. \cite{gorst2013independent} proposed a survival variant of independent screening based on the feature aberration at survival times (FAST) statistic for single-index hazard rate models. \cite{song2014censored} introduced the censored rank independence screening (CRIS) of high-dimensional survival data based on an inverse probability censoring weighted version of Kendall' $\tau$. The study by \cite{zhang2017correlation} examined a procedure known as Correlation Rank Sure Independent Screening (CRSIS), which measures the correlation between each predictor and the cumulative distribution function of the response.
\cite{zhou2017model} extended the non-parametric SIRS to the censored SIRS (CSIRS) using the inverse probability censoring weighting concept.  \cite{chen2018robust} introduced the robust censored distance correlation screening (RCDCS) and composite robust censored distance correlation screening (CRCDCS). \cite{hao2019robust} extended the robust censored sure independent screening (rCSIS). \citep{lin2018model} into the robust censored feature ranking and screening (rCFRS) by the Pearson’s correlation of the distribution of each predictor and the survival function of the survival time. \cite{zhang2021model} proposed the censored distance correlation screening (CDCS) based on the distance correlation after standardizing each covariate and a new variable by combining the information of the observed time and the censoring indicator. \cite{song2022model} generalized the method of \cite{zhu2011model} to the right censored response situation and proposed the weighted sure independent ranking and screening (WSIRS). \cite{li2023model} developed HSIC-SIS by ultrahigh-dimensional survival data using the Hilbert-Schmidt Independence Criterion (HSIC).

In this paper, we proposed a new model-free and robust screening method named as XIMSIS for ultra-high dimensional right-censored survival data. Our method is based on a new correlation--Chatterjee’s correlation $\xi$ \citep{chatterjee2021new, lin2023boosting}. They proved that $\xi$ is indeed a dependence criterion, i.e., $\xi$ is zero if and only if the random variables are independent. 
It presents an attractive option for assessing the strength of bivariate associations, particularly when identifying perfect functional dependence, as discussed in \cite{cao2020correlations}. This advantageous characteristic encourages us to employ the estimator $\xi_{n, M}$ of $\xi$ to filter irrelevant predictors within ultra-high dimensional survival data.
 Compared with previous screening procedures, this novel method boasts several notable benefits. 
 First, Chatterjee’s correlation $\xi$ is a good measure of perfect nonlinear functional dependence. It is zero means that the random variables are independent. 
 Second, the  new screening procedure XIM-SIS based on $\xi_{n,M}$ is robust to the model misspecification and also robust in the presence of outliers or extreme values because the estimator $\xi_{n,M}$  is  one type of rank correlation.
 Third, the corresponding calculation of the proposed procedure  is very simple and fast.
 Finally, this new method maintains the sure screening property under certain regular conditions. Both Monte Carlo simulations and real data analyses demonstrate the effective performance of the proposed procedure. These benefits significantly enhance its practical applicability.

The rest of this paper is organized as follows. In Section 2, we  introduce the new feature screening method for ultra-high dimensional censored data based on the revised Chatterjee’s rank correlation. Section 3 discusses the theoretical properties of the proposed methods. In Section 4 and 5, we provide numerical studies and real data examples to demonstrate the effectiveness of our approach.  We conclude this work with a short discussion in  Section 6. The Appendix contains the regularity conditions and detailed proofs of the main theorems.

\section{Methodologies and main results}
In this section, we propose a robust censored feature screening method through the Chatterjee’s correlation.
Suppose that $\x =\left(X_1, \ldots, X_{p}\right)^{\top} $ be  the $p$-dimensional vector of features, $C$ denote the censoring time, and $T$ denotes the event time. Define $Y=\min (T, C)$, $\Delta=\1(T \leq C)$, where $\1(\cdot)$ denotes the indicator function.  
The observed data are independent and identically distributed copies of $(\x, Y,\Delta)$ and are denoted by $(\x_i , Y_i , \Delta_i)$ for $i=1,\dots,n$, where $\x_i =(X_{1i},\ldots,X_{pi})$.
Throughout the paper, it is assumed that the censoring time $C$ is independent of the event time $T$ and the covariates  $\x$. 
The conditional distribution function of $T$ given $\x$ is denoted by $F(t \mid \x)$. Without model assumption, we define the active set as
$$\mathcal{A}=\left\{k: F(t \mid \x) \text { depends on } X_k \text { for some } t \in \Omega_T\right\}$$
and the inactive sets is defined as $\mathcal{A}^c=\left\{1,2, \ldots, p\right\} \backslash \mathcal{A}$, where $\Omega_T$ is the support set of response $T$. 
The variable $X_k$ is called an active predictor if $k \in \mathcal{A}$ while an inactive predictor if $k \in \mathcal{A}^c$. Our goal is to identify the important  covariates through marginal screening.

\subsection{The correlation $\xi$ and revised Chatterjee’s   rank correlation $\xi_{n,M}$}
To facilitate the presentation, we review the Chatterjee’s  correlation $\xi$ and  and its estimation firstly.
Consider a pair of random variables $(U, V)$, where $V$ is not constant. Let $F_{U, V}$ be their joint cumulative distribution function, and $F_U$ and $F_V$ represent their respective marginal cumulative distribution functions respectively.  
\cite{dette2013copula} introduced their Dette-Siburg-Stoimenov coefficient $\xi$  between $U$ and $V$, formulated as:
\begin{align}\label{xi}
	\xi(U, V) \equiv\frac{\int \operatorname{Var}[E\{\1(V \geqslant v) \mid U\}] \mathrm{d} F_V(v)}{\int \operatorname{Var}\{\1(V \geqslant v)\} \mathrm{d} F_V(v)}
\end{align} 
As a measure of correlation, the correlation $\xi$ has an attractive property, i.e., $\xi(U, V)=0$ if and only if $U$ and $V$ are independent. This makes the  correlation $\xi$ particularly suitable for variable screening of ultra-high dimensional data.
The corresponding  estimator called Chatterjee’s rank correlation $\xi_n$  of  $\xi$ was also proposed by \cite{chatterjee2021new}. 

 However, \cite{lin2023boosting} pointed out  that independence tests based on Chatterjee’s rank correlation are unfortunately rate inefficient against various local alternatives. Therefore,  they  proposed a new efficient and consistent estimator--the revised Chatterjee’s rank correlation $\xi_{n,M}$.  Assume that  $\left(U_1, V_1\right), \ldots,\left(U_n, V_n\right)$ are $n$ independent copies of $(U, V)$.  Define the rank $R_i$ of $V_i$ among the set $\left\{V_1, \ldots, V_n\right\}$ as
$R_i \equiv \sum_{j=1}^n \1\left(V_j \leqslant V_i\right)$.
Define $j_m(i)$ as the index $j \in\{1, \ldots, n\}$  such that for any $i \in\{1, \ldots, n\}$ and $m \in\{1, \ldots, n\}$, 
$\sum_{k=1}^n \1\left(U_i<U_k \leqslant U_j\right)=m$
if such $j$ exists; otherwise, $j_m(i)=i$.
 Then, \cite{lin2023boosting}   defined the revised Chatterjee's rank correlation $\xi_{n, M}$ as
\begin{align}\label{revisedxi}
	\xi_{n, M}=\xi_{n, M}\left(\left[\left(U_i, V_i\right)\right]_{i=1}^n\right) \equiv-2+\frac{6 \sum_{i=1}^n \sum_{m=1}^M \min \left\{R_i, R_{j_m(i)}\right\}}{(n+1)\{n M+M(M+1) / 4\}}
\end{align}
where $M\in\{1, \ldots, n\}$ represents the number of right nearest neighbors. Obviously, different values of $M$ may lead to varying levels of final accuracy, and  the selection of  $M$   has been discussed by  \cite{lin2023boosting}.   And it was demonstrated that $M$ is of the order of $\sqrt{n}$. In the numerical studies,  we select  the  value  $M$  to be $c\sqrt{n}$, where $c$ is some positive constant.

\subsection{Revised Chatterjee’s rank correlation screening}

In fact, it is unfeasible to measure the correlation between the survival time and each covariate through the correlation  $\xi$ (\ref{xi}) for  right censored data. Furthermore, many covariates may follow heavy-tailed distributions and contain outliers. To assess the correlation in this circumstance, we suggest to use the  Chatterjee’s  correlation of $X_k$ and $T$ by  $\xi\left(F_{k}(X_k), S(T)\right)$, where $F_{k}(x)=P(X_k\leq x)$ and $S(t)=P(T\geq t)$. Unlike most coefficients,  the correlation $\xi$  is not symmetric, that is, $\xi(U,V)\neq \xi(V,U)$. $\xi(U, V)$ is an estimator of whether $V$ is a measurable function of $U$,
 and if we want to measure whether $U$ is a function of $V$, $\xi_n(V, U)$ should be used instead of $\xi_n(U, V)$. A usage of the symmetrized Chatterjee’s correlation coefficient $\max \{\xi(U, V), \xi(V, U)\}$ is also recommended by \cite{chatterjee2021new}, and some of its properties have been researched by \cite{zhang2023asymptotic}.

Thus, we take maximum of  $\xi\left(F_{k}(X_k), S(T)\right)$ and $\xi\left( S(T), F_{k}(X_k)\right)$ as a measure of correlation between $X_k$ and $T$, i.e., 
$$
	\omega_k = \max\left\{ \ \xi\left(F_{k}(X_k), S(T)\right), \ \xi\left(S(T), F_{k}(X_k)\right) \right\}, \quad k=1, \ldots, p.
$$
Hence,  $\omega_k$ is symmetric, and the independence of $X_k$ and $T$ implies $\omega_k=0$. 

We then apply the revised Chatterjee’s rank correlation (\ref{revisedxi}) to estimate $\omega_k$, that is, 
\begin{align}
	\hat{\omega}_k &= \max\left\{  \xi_{n,M}\left(\hat{F}_{k}(X_k), \hat{S}(T)\right), \ \xi_{n,M}\left(\hat{S}(T), \hat{F}_{k}(X_k)\right) \right\}
\end{align}
where $\hat{F}_{k}(x)=\frac1  n \sum_{i=1}^n \1\left(X_{i k} \leq x\right)$, $\hat{S}(\cdot)$ is the Kaplan-Meier estimate (KM) of $T$ with form
\begin{align}
	\hat{S}(t) \equiv \prod_{i: Y_i \leq t}\left[1-\frac{\Delta_i}{\sum_{j=1}^n \1\left(Y_j \geq Y_i\right)}\right] . \notag
\end{align}
We consider  to rank the  predictors by  $\omega_k$ as a measure of marginal utility. A high value of $\hat{\omega}_k$ indicates a stronger correlation between the response $T$ and the  covariate $X_k$.  With a pre-specified $\gamma$, we select a set of variables
$$
	\hat{\mathcal{A}}=\left\{k: \hat{\omega}_k \geq \gamma, \text { for } 1 \leq k \leq p\right\}.
$$
We call this screening procedure as XIM-SIS.
Next, we study the theoretical properties of the proposed screening procedure XIM-SIS. The following assumptions are needed to establish the ranking consistency property and sure screening property.

\begin{itemize}
	\item[(C1)] There exist positive constants $v$ and $\eta$ such that $P(t \leq T \leq C) \geq \eta$, where $t \in\left(0, v\right]$ and the $v$ is the maximum follow-up time.
	\item[(C2)]  The minimum of marginal signals of truly important covariates satisfies
	\begin{align}
		\min _{k \in \mathcal{A}} \omega_k \geqslant 2 c n^{-\kappa}
	\end{align}
	for some constants $c>0$ and $0 \leqslant \kappa<1 / 2$. 
\end{itemize}

The assumption (C1) is common in the survival analysis to make sure that the Kaplan–Meier estimator is well behaved. See \cite{he2013quantile},  \cite{zhou2017model}, \cite{chen2018robust}, and so on.
The assumption (C2) ensures that the $\omega_k$ of each active variable $X_k$ is not too close to 0, which is a common assumption in works on marginal screening (see, for instance, \cite{fan2016interaction}, \cite{chen2018robust}, \cite{guo2022stable}, \cite{wang2023spearman}, \cite{zhou2023bolt}).

\begin{theorem}[Sure screening property]
	\label{Sure screening property}
	Under condition (C1), there exists a positive constant $\Gamma >0$, such that
	\begin{align}
		P\left(\max _{1 \leqslant k \leqslant p}\left|\hat{\omega}_k-\omega_k\right| \geqslant c n^{-\kappa}\right) \leqslant O\left(p \exp \left\{-\Gamma n^{1-2 \kappa}\right\}\right) . \label{scp.1}
	\end{align}
	Under Condition (C2), we have
	\begin{align}
		P(\mathcal{A} \subseteq \hat{\mathcal{A}}) \geqslant 1-O\left(s \exp \left\{-\Gamma n^{1-2 \kappa}\right\}\right), \label{scp.2}
	\end{align}
	where $s=|\mathcal{A}|$ is the cardinality of $\mathcal{A}$. 
\end{theorem}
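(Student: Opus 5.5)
We prove the first inequality \eqref{scp.1} one covariate at a time and finish with a union bound over $k=1,\ldots,p$. Since $|\max\{a,b\}-\max\{a',b'\}|\le \max\{|a-a'|,|b-b'|\}$, it suffices to show that each of the two directional estimates
\[
\xi_{n,M}\bigl(\hat F_k(X_k),\hat S(T)\bigr)\quad\text{and}\quad \xi_{n,M}\bigl(\hat S(T),\hat F_k(X_k)\bigr)
\]
deviates from its population counterpart by at least $cn^{-\kappa}$ with probability $O(\exp\{-\Gamma n^{1-2\kappa}\})$, uniformly in $k$.

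\textbf{Step 1: reduce to an oracle statistic.} Because $\xi_{n,M}$ depends on the data only through ranks, and $\hat F_k$ is monotone, replacing $\hat F_k(X_{ik})$ by $F_k(X_{ik})$ (or by $X_{ik}$ itself) leaves the ranks unchanged. For the survival part, the relevant ranks are those of $\hat S(T_i)$. I read the procedure as using the observed $Y_i$, with $\hat S(Y_i)$ as a proxy for $S(T_i)$. I therefore split
\[
\hat\omega_k-\omega_k=(\hat\omega_k-\tilde\omega_k)+(\tilde\omega_k-\omega_k),
\]
where $\tilde\omega_k$ is the same statistic computed with the true $S$ in place of $\hat S$.

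\textbf{Step 2: control the Kaplan--Meier perturbation.} Under (C1), the exponential inequality for Kaplan--Meier (e.g.\ as used by \cite{zhou2017model,chen2018robust}) gives
\[
P\Bigl(\sup_{t\le v}|\hat S(t)-S(t)|\ge \epsilon\Bigr)\le c_1\exp(-c_2 n\epsilon^2).
\]
The statistic $\xi_{n,M}$ is a normalized sum of $\min\{R_i,R_{j_m(i)}\}$. Replacing $S$ by $\hat S$ changes a rank $R_i$ only through pairs whose $S$-values lie within $\epsilon$ of each other. On the event $\sup|\hat S-S|<\epsilon$, the change in $\xi_{n,M}$ is therefore bounded by a constant times the empirical proportion of such near-tied pairs. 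This proportion is $O(\epsilon)$ plus an empirical-process deviation, which is itself exponentially small by Hoeffding's inequality for U-statistics. Taking $\epsilon\asymp n^{-\kappa}$ yields $P(|\hat\omega_k-\tilde\omega_k|\ge cn^{-\kappa}/2)\le C\exp(-\Gamma n^{1-2\kappa})$. The same event works for all $k$ simultaneously, since $\hat S$ does not depend on $k$.

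\textbf{Step 3: concentration of the oracle statistic.} I would treat this as the main obstacle. The difficulty is that $\xi_{n,M}$ is built from nearest-neighbour ranks and is not a U-statistic. I would use McDiarmid's bounded difference inequality. Changing a single observation $(X_{ik},T_i)$ changes each rank by at least $1$. It also changes the right-$M$-neighbour sets $j_m(\cdot)$ of at most $O(M)$ indices, and each such term is bounded by $n$. Together with the normalization $(n+1)nM$, each bounded difference is $O(1/n)$, so
\[
P\bigl(|\tilde\xi-E\tilde\xi|\ge t\bigr)\le 2\exp(-c_3nt^2).
\]
Separately, the bias satisfies $|E\tilde\xi_{n,M}-\xi|=o(n^{-\kappa})$ for $\kappa<1/2$, provided $M=o(n^{1-\kappa})$; this follows from the consistency analysis of \cite{lin2023boosting} and \cite{chatterjee2021new} (the bias is of order $M/n$ plus the nearest-neighbour approximation error). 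If that bias rate is not available in general, I would add a mild continuity assumption. With $t=cn^{-\kappa}/4$, combining the bias bound, the concentration bound and Step 2, then taking the union over $p$ indices, gives \eqref{scp.1}.

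\textbf{Step 4: sure screening.} Take $\gamma=cn^{-\kappa}$. If $k\in\mathcal A$ but $\hat\omega_k<\gamma$, then (C2) implies $|\hat\omega_k-\omega_k|>cn^{-\kappa}$. Hence
\[
P(\mathcal A\not\subseteq\hat{\mathcal A})\le P\Bigl(\max_{k\in\mathcal A}|\hat\omega_k-\omega_k|\ge cn^{-\kappa}\Bigr),
\]
and applying the per-$k$ bound over only the $s$ active indices yields \eqref{scp.2}.
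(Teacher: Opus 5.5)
Your Steps 1, 3 and 4 follow the paper closely: rank invariance under $\hat F_k$, a bounded-difference bound for the oracle statistic plus a bias term, a union bound, and the (C2) threshold argument. Your inequality $|\max\{a,b\}-\max\{a',b'\}|\le\max\{|a-a'|,|b-b'|\}$ is a cleaner reduction than the paper's $\tfrac12(a+b+|a-b|)$ decomposition.

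\textbf{The gap.} It lies in how the survival part connects to $\omega_k$. Once you read the statistic as built from $\hat S(Y_i)$, it is a rank statistic of the pairs $(X_{ik},Y_i)$. The reason is that $\xi_{n,M}$ sees only ranks, and both $\hat S$ and $S$ are nonincreasing. So $\xi_{n,M}\bigl(\hat F_k,\hat S(Y)\bigr)$ and your oracle built from $S(Y_i)$ both coincide with $\xi_{n,M}(X_k,-Y)$, up to ties of $\hat S$; the same holds for the reversed coefficient. Hence $E\tilde\xi_{n,M}$ converges to $\xi\bigl(F_k(X_k),S(Y)\bigr)$, the dependence of $X_k$ on the \emph{censored} time $Y=\min(T,C)$, not to $\xi\bigl(F_k(X_k),S(T)\bigr)$.

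So the bias claim $|E\tilde\xi_{n,M}-\xi|=o(n^{-\kappa})$ in Step 3 fails in general, and the error is of order one. For example, take $T=X_k$ with nontrivial independent censoring. Then $\omega_k=1$, but $Y$ is not a function of $X_k$ and $X_k$ is not a function of $Y$, so the limit of $\tilde\omega_k$ is strictly below $1$.

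The Kaplan--Meier exponential inequality cannot repair this. It compares $\hat S$ and $S$ at a common argument, whereas the discrepancy comes from the argument itself ($Y_i\neq T_i$ whenever $\Delta_i=0$). If instead $\tilde\omega_k$ is meant to use $S(T_i)$, then Step 2 breaks, because $|\hat S(Y_i)-S(T_i)|$ is not small for censored subjects. Separately, in the reversed coefficient $\xi_{n,M}(\hat S,\hat F_k)$, ties of $\hat S$ perturb the neighbour indices $j_m(i)$ rather than the ranks, and your tie-counting does not address that.

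\textbf{Comparison with the paper.} The paper sidesteps all of this by treating $\hat S$ as evaluated at the event times $T_i$. In Lemma A.2 it asserts the exact identity $\xi_{n,M}(\hat F_k,\hat S)=\xi_{n,M}(F_k,S)$, from monotonicity plus random tie-breaking. No KM perturbation bound is used, so (C1) plays no real role. Under that reading your Step 2 is unnecessary and your argument reduces to the paper's; at most, a near-tie count like yours is what one would need to justify the paper's tie-breaking step. Under your observed-data reading, the statement needs either a different target (the $Y$-based coefficient) or a genuinely new ingredient linking the ranks of $Y$ to the law of $T$.

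Your caveat that a bias rate $o(n^{-\kappa})$ is required is well taken. The paper only uses $\lim_{n\to\infty}E(Q_{n,M})=Q$, which does not suffice once $\epsilon=cn^{-\kappa}$ shrinks. One minor slip: changing one observation moves each rank by at most $1$, not at least $1$.
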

We can point out the size of the reduced model after XIM-SIS screening in the following theorem.
\begin{theorem}[Controlling false discoveries]
	\label{Controlling false discoveries}
	We have
	\begin{align}
		P\left(|\hat{\mathcal{A}}| \leqslant 2 c^{-1} n^\kappa \sum_{k=1}^p \omega_k\right) \geqslant 1-O\left(p \exp \left\{-\Gamma n^{1-2 \kappa}\right\}\right) . \label{cfd}
	\end{align}
\end{theorem}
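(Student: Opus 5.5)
This is the standard counting argument for screening procedures, run on the event controlled by the uniform bound \eqref{scp.1} of Theorem \ref{Sure screening property}. The idea is that on the good event, every selected index must have a population signal of at least order $n^{-\kappa}$. The number of such indices is then bounded by the total signal $\sum_k \omega_k$ divided by that threshold.

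The first step is to fix the threshold. We take $\gamma = c n^{-\kappa}$, which is the choice implicit in the proof of \eqref{scp.2}: it is what makes $\min_{k\in\mathcal A}\omega_k \ge 2cn^{-\kappa}$ combine with $|\hat\omega_k-\omega_k|<cn^{-\kappa}$ to give $\hat\omega_k>\gamma$. We then define the event
\begin{align*}
\mathcal{E}_n=\Bigl\{\max_{1\le k\le p}|\hat\omega_k-\omega_k| < \tfrac{1}{2}c n^{-\kappa}\Bigr\}.
\end{align*}
Applying \eqref{scp.1} with $c$ replaced by $c/2$ gives $P(\mathcal E_n^c)\le O(p\exp\{-\Gamma n^{1-2\kappa}\})$. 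Only the constant $c$ changes here, and any change in the constant is absorbed into $\Gamma$ or into the $O(\cdot)$. This is the only probabilistic input, and the cost is $p$ rather than $s$ because the bound must hold uniformly over all $k$.

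On $\mathcal E_n$, any $k\in\hat{\mathcal A}$ satisfies $\hat\omega_k\ge cn^{-\kappa}$, and therefore $\omega_k > cn^{-\kappa}-\tfrac12 cn^{-\kappa}=\tfrac12 cn^{-\kappa}$. Hence $\hat{\mathcal A}\subseteq\{k:\omega_k>\tfrac12 cn^{-\kappa}\}$. Since every $\omega_k\ge 0$ (each is a maximum of two ratios of integrated variances), the cardinality of this set satisfies
\begin{align*}
\Bigl|\{k:\omega_k>\tfrac12 cn^{-\kappa}\}\Bigr|\cdot \tfrac12 cn^{-\kappa}\le \sum_{k=1}^p\omega_k ,
\end{align*}
which gives $|\hat{\mathcal A}|\le 2c^{-1}n^{\kappa}\sum_{k=1}^p\omega_k$ on $\mathcal E_n$. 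Taking complements yields \eqref{cfd}.

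The only delicate point is the bookkeeping of constants. We need the deviation level used in \eqref{scp.1} to be strictly smaller than the threshold $\gamma$, which is why we use $\tfrac12 cn^{-\kappa}$ rather than $cn^{-\kappa}$. We then need that halving $c$ only changes $\Gamma$, and this follows because the exponential bound in Theorem \ref{Sure screening property} holds for an arbitrary positive constant in front of $n^{-\kappa}$. No new concentration argument for the Kaplan--Meier plug-in or for $\xi_{n,M}$ is required, since all of it is contained in \eqref{scp.1}.
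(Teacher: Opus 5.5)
Your proof is correct and follows the paper's argument essentially step for step. The paper works on the event $\mathcal{C}=\{\max_k|\hat\omega_k-\omega_k|\le 2^{-1}cn^{-\kappa}\}$, notes that $\hat{\mathcal A}$ (threshold $cn^{-\kappa}$) is then contained in $\{k:\omega_k\ge 2^{-1}cn^{-\kappa}\}$, and bounds that set's size by $2c^{-1}n^{\kappa}\sum_k\omega_k$ with the same counting argument, phrased by contradiction. Your version is slightly more careful: you state that $\omega_k\ge 0$ is needed for the counting step, and that replacing $c$ by $c/2$ in \eqref{scp.1} changes $\Gamma$ (to $\Gamma/4$), which the paper silently absorbs.
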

This theorem suggests that after screening, the model tends to have a polynomial size with a probability close to 1 if $\sum_{k=1}^p \omega_k=O\left(n^a\right)$ for some $a>0$. Additionally, $Y$ is independent of $X_{\mathcal{A}^c}$ given $X_{\mathcal{A}}$. It implies a stronger dependency of $Y$ on $X_{\mathcal{A}}$ compared to $X_{\mathcal{A}^c}$. In simpler terms, $\omega_k$ for $k \in \mathcal{A}$ are larger than $\omega_k$ for $k \in \mathcal{A}^c$, formulated as:
\begin{itemize}
	\item[(C3)] $\min\limits _{k \in \mathcal{A}} \omega_k-\max \limits_{k \in \mathcal{A}^c} \omega_k=\delta>0$ for some constant $\delta$.
\end{itemize}
With assumption (C3), the XIM-SIS procedure exhibits the property of ranking consistency and the proof is given in the Appendix.
\begin{theorem}
	\label{Theorem 3}
	Under the conditions $(\mathrm{C} 1)$ and $(\mathrm{C} 3)$, we have
	\begin{align}
		P\left(\min _{k \in \mathcal{A}} \hat{\omega}_k>\max _{k \in \mathcal{A}^c} \hat{\omega}_k\right) \geqslant 1-O\left(p \exp \left\{-\Gamma n \delta^2\right\}\right) . \label{th3}
	\end{align}
\end{theorem}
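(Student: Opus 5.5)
The plan is to deduce Theorem \ref{Theorem 3} from the uniform concentration bound (\ref{scp.1}) of Theorem \ref{Sure screening property}, taking the deviation level to be a fixed constant rather than $cn^{-\kappa}$. The proof of (\ref{scp.1}) really yields, for every $\epsilon>0$ with $\epsilon$ possibly constant,
\begin{align*}
P\left(\max_{1\le k\le p}|\hat\omega_k-\omega_k|\ge \epsilon\right)\le O\left(p\exp\{-\Gamma' n\epsilon^2\}\right),
\end{align*}
with $\Gamma'$ not depending on $\epsilon$. Setting $\epsilon=cn^{-\kappa}$ gives the stated form $n^{1-2\kappa}$. So the first step is to state this intermediate tail bound explicitly and to check that its constant does not depend on $\epsilon$. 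I would then apply it with $\epsilon=\delta/2$.

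The second step is deterministic. On the event $E=\{\max_k|\hat\omega_k-\omega_k|<\delta/2\}$, condition (C3) gives
\begin{align*}
\min_{k\in\mathcal A}\hat\omega_k>\min_{k\in\mathcal A}\omega_k-\delta/2=\max_{k\in\mathcal A^c}\omega_k+\delta/2>\max_{k\in\mathcal A^c}\hat\omega_k .
\end{align*}
Hence $P(\min_{\mathcal A}\hat\omega_k>\max_{\mathcal A^c}\hat\omega_k)\ge P(E)\ge 1-O(p\exp\{-\Gamma n\delta^2\})$, after absorbing the factor $1/4$ into $\Gamma$.

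The main obstacle is justifying the $\epsilon$-uniform exponential bound. The argument behind (\ref{scp.1}) has to handle two plug-in errors in $\hat\omega_k$: the empirical marginal $\hat F_k$ and the Kaplan--Meier $\hat S$. I would handle them as follows.
\begin{itemize}
\item[(i)] Since $\hat F_k$ is monotone, it preserves the ranks of $X_{ik}$. The nearest-neighbour indices $j_m(i)$ and ranks built from $\hat F_k(X_{ik})$ therefore coincide, up to ties, with those built from $F_k(X_{ik})$. Consequently $\xi_{n,M}$ changes only through the $\hat S$ ranks.
\item[(ii)] Under (C1), $\hat S$ satisfies a Dvoretzky--Kiefer--Wolfowitz-type exponential inequality $P(\sup_{t\le v}|\hat S(t)-S(t)|\ge\epsilon)\le C\exp(-c'n\epsilon^2)$, for example from Bitouz\'e--Laurent--Massart or the results used in \cite{zhou2017model}.
\item[(iii)] $\xi_{n,M}$ is a function of ranks whose value changes by $O(1/n)$ when one observation changes (or $O(M/n)$ from neighbour changes, which is still $o(1)$ after normalization). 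McDiarmid's inequality then gives concentration of $\xi_{n,M}$ around its mean at rate $\exp(-cn\epsilon^2)$. The bias $|E\xi_{n,M}-\xi|$ tends to zero, and is below $\epsilon/2$ once $n$ is large for fixed $\delta$.
\end{itemize}

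The maximum of two such estimates satisfies the same bound, because $|\max(a,b)-\max(a',b')|\le\max(|a-a'|,|b-b'|)$. A union bound over the $p$ indices then produces the factor $p$. The delicate point is (iii) combined with the bias term. If the sure-screening proof has already established these ingredients, Theorem \ref{Theorem 3} follows immediately as above.
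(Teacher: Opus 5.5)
Your proposal is correct and follows the paper's proof. The paper also bounds $P(\min_{k\in\mathcal A}\hat\omega_k\le\max_{k\in\mathcal A^c}\hat\omega_k)$ by $P(\max_{1\le k\le p}|\hat\omega_k-\omega_k|\ge\delta/2)$, and then uses a union bound with the general-$\epsilon$ estimate $P(|\hat\omega_k-\omega_k|\ge\epsilon)\le 8\exp\{-\Gamma_3(\epsilon^2n^2-4\epsilon)/(576n)\}$ at $\epsilon=\delta/2$; that estimate is proved at the start of the proof of Theorem~\ref{Sure screening property}. Your sketch departs from the paper only at step (ii): rather than a DKW-type inequality for the Kaplan--Meier estimator, the paper uses Lemma~\ref{lemma.2}, where monotonicity of $\hat F_k$ and $\hat S$ gives $\xi_{n,M}(\hat F_k,\hat S)=\xi_{n,M}(F_k,S)$ exactly, because $\xi_{n,M}$ depends on them only through ranks; your step (ii) is therefore unnecessary, and a sup-norm bound on $\hat S-S$ would not by itself control a rank statistic anyway.
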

Without necessary to assume that $\delta$ remains constant,  Theorem  \ref{Theorem 3}    implies that $\omega_k$ can consistently rank active variables above inactive with a probability approaching 1, given that $pe^{-\Gamma n\delta^2}=o(1)$. Instead, as $n \rightarrow \infty$, $\delta$ is permitted to tend towards zero. Specifically, the ranking consistency property is still true if we set $\delta=O\left(n^{-\gamma}\right), 0<\gamma<1 / 2$ and $\ln p=o\left(n^{1-2 \gamma}\right)$.

\section{Simulation Study}
This section assesses the finite-sample performance of our proposed method ``XIM-SIS'' by using numercal studies.  We  conduct a comparative analysis against existing competitors, such as the methods   FAST \citep{gorst2013independent}, CRIS  \citep{song2014censored},  CRSIS \citep{zhang2017correlation},  CSIRS
  \citep{zhou2017model},   rCDCS \citep{chen2018robust}, CCRIS  \citep{zhang2018censored},  rCFRS  \citep{hao2019robust},  CDCS  \citep{zhang2021model},  WSIRS \citep{song2022model},   and HSIC-SIS \citep{li2023model}.
 For the selection of $M$ in our method ``XIM-SIS'',  we choose the   $M=[\sqrt{n}]-1 $(XIM-SIS1), $[\sqrt{n}]$ (XIM-SIS2), $[\sqrt{n}]+1 $(XIN-SIS3) since $M$ is of the order of $\sqrt{n}$, where $[\cdot]$ is the least integer function.
  
  We assess the effectiveness of the screening methods using $N=500$  simulated replications. We employ three evaluation criteria as suggested by \cite{li2012feature} . 
  The first criterion is the  minimum model size that encompasses all active predictors, symbolized by 
  $\mathcal{S}$  serves as an indicator of the resulting model's complexity for each screening method. The closer the value of   $\mathcal{S}$
   is to the true  model size, the more effective the screening method is. We provide the 5\%, 25\%, 50\%(median), 75\%, 95\% quantiles , and  interquartile range (IQR) of  $\mathcal{S}$ from the 100 replications. 
   The second criterion is  the selection proportion of each true predictors   across the 100 replications,  denoted by  $\mathcal{P}_j$ , where $j$ is the index of truly important predictor.  The third criterion is  the selection proportion at which all active predictors are chosen simultaneously for a specific model  in the 100 replications, denoted by  $\mathcal{P}_a$. 
 An efficient screening method should yield a  $\mathcal{S}$ value close to the actual model size, and both  $\mathcal{P}_j$ and $\mathcal{P}_a$ should be close to 1. We determine the estimated model size to be $d_1=\lceil n / \log n\rceil$,  where $\lceil x\rceil$ denotes the integer part of $x$ and $d_2=n-1$. 
 All of results  related to Median,  IQR, and  $\mathcal{P}_j$ and $\mathcal{P}_a$  are listed in   Table \ref{ex:cox}-\ref{ex:log exp}.  Due to space constraints, the remaining results are included in the supplementary materials.

  \begin{example}\label{ex:cox}
  	Consider the survival time $T$ generated from the Cox proportional hazards model:
  	$$
  	h(t \mid X)=h_0(t) \exp \left(X^T \beta\right),
  	$$
  	where the baseline hazard function was set to be $h_0(t)=0.5$, $X \sim N_p(0, \boldsymbol{\Sigma})$ with $\boldsymbol{\Sigma}=$ $\left(0.6^{|i-j|}\right)$ for $i, j=1, \ldots, p$, $\beta=(0.35,0.35,0.35,0.35,0.35,0, \ldots, 0)^T$, $p=2000$, $n=100, 200$. Here $T$ was generated with formula $T=-\frac{\log (u)}{\lambda \exp \left(X^T \beta\right)}$, where $u \sim U(0,1)$ \citep{austin2012generating}.  The censoring time $C \sim \operatorname{Unif}(0, c)$ with $c$ chosen to achieve censoring rate $CR=30\%, 50\%$.
  \end{example}

  \begin{table}[t]
  	\setlength{\tabcolsep}{11pt} 
  	\centering
  	\caption{The median and IQR of $\mathcal{S}$, the proportion of $\mathcal{P}_j$ and $\mathcal{P}_a$ for Example \ref{ex:cox}}
  	\label{tab:cox}
  	\resizebox{\textwidth}{!}{ 
  		\begin{tabular}{lllrrllllllllllll} 
  			\toprule 
  			\multicolumn{1}{l}{\multirow{2}{*}{CR}}   & \multicolumn{1}{l}{\multirow{2}{*}{$n$}} & \multicolumn{1}{l}{\multirow{2}{*}{Method}} &  \multicolumn{1}{r}{\multirow{2}{*}{Median}}&\multicolumn{1}{r}{\multirow{2}{*}{IQR}}& \multicolumn{6}{l}{$d_1=n/log(n)$} & \multicolumn{6}{l}{$d_2=n-1$}  \\ \cmidrule(lr){6-11} \cmidrule(lr){12-17} 
  			& &  & &&$\mathcal{P}_1$ &$\mathcal{P}_2$ &$\mathcal{P}_3$ &$\mathcal{P}_4$   &$\mathcal{P}_5$& \multicolumn{1}{l}{$\mathcal{P}_a$}  &$\mathcal{P}_1$ &$\mathcal{P}_2$ &$\mathcal{P}_3$ & $\mathcal{P}_4$ &$\mathcal{P}_5$&$\mathcal{P}_a$ \\ \hline \addlinespace
  			\multirow[t]{24}{*}{30\%}& \multirow[t]{12}{*}{100}&  FAST & 6 & 4 &0.942 &0.992 &1.000 &0.988 &0.958 &0.884 &0.988 &1.000 &1.000 &0.998 &0.990 &0.978 
  			\\
  			& & CRIS  & 41 &126 &0.618 &0.838 & 0.896 & 0.824 &  0.630 &0.380 & 0.856 & 0.942 & 0.964 &0.934 & 0.832 &0.690 
  			\\
  			& & CCRIS  & 2000 &12 &0.006 & 0.024 & 0.024 & 0.020 &0.022 &0.000 & 0.026 & 0.052 & 0.048 & 0.046 & 0.030 &0.002 
  			\\
  			& & CSIRS  & 5 &4 & 0.934 & 0.992 & 0.998 & 0.994 &  0.952 &0.880 & 0.984 & 1.000 & 1.000 & 0.996 & 0.986 &0.970 
  			\\
  			& & WSIRS & 10 &30 & 0.786 & 0.914 & 0.938 & 0.918 &  0.818 &0.664 & 0.924 & 0.982 & 0.982 & 0.978 & 0.938 &0.880 
  			\\
  			& & rCFRS & 8 &14 &0.890 &0.974 &0.994 & 0.978 &0.896 &0.780 &0.964 &1.000 &0.998 & 0.994 &0.976 &0.934 
  			\\
  			& & rCDCS & 9 &21 & 0.844 & 0.958 &0.990 & 0.968 &0.870 &0.710 & 0.954 & 0.998 & 0.996 & 0.988 & 0.962 &0.904 
  			\\
  			& & CRSIS & 7 &10 & 0.914 & 0.984 & 0.992 & 0.986 &  0.900 &0.816 & 0.966 & 0.998 & 1.000 & 0.998 & 0.986 &0.948 
  			\\
  			& & CDCS & 6 &5 & 0.932 & 0.988 & 1.000 & 0.990 &  0.940 &0.854 & 0.984 & 1.000 & 1.000 & 0.998 & 0.990 &0.974 
  			\\
  			& &  HSIC-SIS & 20 &61 & 0.754 & 0.914 & 0.972 & 0.934 &0.766 &0.512 & 0.918 & 0.974 & 0.998 & 0.990 & 0.910 &0.810 
  			\\
  			& & XIM-SIS1  & 10 &19 &0.848 &0.960 &0.990 &0.974 & 0.878 &0.722 &0.956 &0.988 &1.000 &0.990 & 0.974 &0.914 
  			\\
  			& & XIM-SIS2  & 9 & 17 & 0.860 & 0.962 & 0.992 & 0.982 & 0.888 & 0.738 & 0.960 & 0.990 & 1.000 & 0.996 & 0.972 &0.920 
  			\\
  			& & XIM-SIS3  & 8 & 15 & 0.872 & 0.972 & 0.992 & 0.984 & 0.898 & 0.768 & 0.966 & 0.990 & 1.000 & 0.998 & 0.978 &0.934 
  			\\ \addlinespace
  			& \multirow[t]{12}{*}{200}& FAST  & 5 &0 & 1.000 & 1.000 & 1.000 & 1.000 &  1.000 &1.000 & 1.000 & 1.000 & 1.000 & 1.000 &   1.000 &1.000 
  			\\
  			& & CRIS & 5 &4 & 0.964 & 0.986 & 1.000 & 0.994 &  0.950 &0.914 & 0.986 & 1.000 & 1.000 & 1.000 &  0.986 &0.974 
  			\\
  			& & CCRIS & 2000 &0 & 0.012 & 0.024 & 0.026 & 0.016 &  0.012 &0.000 & 0.020 & 0.036 & 0.048 & 0.034 & 0.018 &0.006 
  			\\
  			& & CSIRS & 5 &0 & 1.000 & 1.000 & 1.000 & 1.000 &  1.000 &1.000 & 1.000 & 1.000 & 1.000 & 1.000 & 1.000 &1.000 
  			\\
  			& & WSIRS & 5 &1 & 0.938 & 0.960 &0.974 & 0.968 &0.926 &0.904 &0.978 & 0.990 & 0.992 & 0.994 &0.982 &0.974 
  			\\
  			& & rCFRS & 5 &0 & 1.000 & 1.000 & 1.000 & 1.000 &  0.998 &0.998 & 1.000 & 1.000 & 1.000 & 1.000 & 1.000 &1.000 
  			\\
  			& & rCDCS & 5 &0 & 0.998 & 1.000 & 1.000 & 1.000 &  0.996 &0.994 & 1.000 & 1.000 & 1.000 & 1.000 & 0.998 &0.998 
  			\\
  			& & CRSIS & 5 &0 & 1.000 & 1.000 & 1.000 & 1.000 &  0.998 &0.998 & 1.000 & 1.000 & 1.000 & 1.000 & 1.000 &1.000 
  			\\
  			& & CDCS & 5 &0 & 1.000 & 1.000 & 1.000 & 1.000 &  1.000 &1.000 & 1.000 & 1.000 & 1.000 & 1.000 & 1.000 &1.000 
  			\\
  			& &  HSIC-SIS & 5 &1 & 1.000 & 1.000 & 1.000 & 1.000 &  0.988 &0.988 & 1.000 & 1.000 & 1.000 & 1.000 & 1.000 &1.000 
  			\\
  			& & XIM-SIS1  & 5 &0 &0.996 &1.000 &1.000 &1.000 & 0.988 &0.984 &1.000 &1.000 &1.000 &1.000 & 0.998 &0.998 
  			\\
  			& & XIM-SIS2  & 5 & 0 & 0.998 & 1.000 & 1.000 & 1.000 & 0.988 & 0.986 & 1.000 & 1.000 & 1.000 & 1.000 & 0.996 &0.996 
  			\\
  			& & XIM-SIS3  & 5 & 0 & 0.998 & 1.000 & 1.000 & 1.000 & 0.994 & 0.992 & 1.000 & 1.000 & 1.000 & 1.000 & 0.998 &0.998 
  			\\ 
  			\hline 
  			\addlinespace 
  			\multirow[t]{24}{*}{50\%}& \multirow[t]{12}{*}{100}& FAST  & 8 &12 & 0.888 & 0.978 & 0.996 & 0.976 &  0.904 &0.780 & 0.970 & 1.000 & 1.000 & 0.994 &   0.978 &0.944 
  			\\
  			& & CRIS & 41 &115 & 0.646 &0.832 & 0.904 &0.836 &0.638 &0.372 &0.858 & 0.950 &0.966 &0.932 &0.854 &0.688 
  			\\
  			& & CCRIS & 2000 &2 & 0.000 & 0.000 & 0.002 & 0.000 &  0.000 &0.000 & 0.002 & 0.004 & 0.002 & 0.000 & 0.004 &0.000 
  			\\
  			& & CSIRS & 8 &17 & 0.872 & 0.976 & 0.992 & 0.976 &  0.876 &0.740 & 0.954 & 0.996 & 0.998 & 0.992 & 0.966 &0.912 
  			\\
  			& & WSIRS & 150 &375 & 0.354 & 0.520 &0.568 & 0.540 &0.386 &0.194 & 0.580 & 0.726 & 0.776 & 0.726 &0.594 &0.418 
  			\\
  			& & rCFRS & 40 &103 & 0.644 &0.852 & 0.906 &0.866 &0.672 &0.374 & 0.858 & 0.944 & 0.976 & 0.950 &0.880 &0.720 
  			\\
  			& & rCDCS & 53 &150 & 0.604 & 0.818 &0.880 & 0.852 &0.624 &0.304 & 0.828 & 0.938 & 0.968 &0.944 &0.842 &0.654 
  			\\
  			& & CRSIS & 34 &93 & 0.656 & 0.860 &0.912 & 0.894 &0.694 &0.398 & 0.870 & 0.956 & 0.978 & 0.964 & 0.886 &0.742 
  			\\
  			& & CDCS & 7 &17 & 0.876 & 0.972 & 0.996 & 0.980 &  0.874 &0.748 & 0.962 & 0.998 & 1.000 & 0.990 & 0.972 &0.926 
  			\\
  			& &  HSIC-SIS & 29 &108 & 0.694 & 0.888 & 0.948 & 0.908 &0.718 &0.424 & 0.876 & 0.968 & 0.990 & 0.968 & 0.844 &0.716 
  			\\
  			& & XIM-SIS1  & 60 &162 &0.594 &0.808 &0.874 &0.832 & 0.604 &0.286 &0.790 &0.924 &0.958 &0.950 & 0.834 &0.608 
  			\\
  			& & XIM-SIS2  & 49 & 140 & 0.614 & 0.822 & 0.880 & 0.850 & 0.622 & 0.308 & 0.820 & 0.932 & 0.964 & 0.952 & 0.842 &0.640 
  			\\
  			& & XIM-SIS3  & 44 & 126 & 0.626 & 0.838 & 0.892 & 0.862 & 0.646 & 0.338 & 0.836 & 0.934 & 0.968 & 0.966 & 0.856 &0.666 
  			\\
  			\addlinespace
  			& \multirow[t]{3}{*}{200}& FAST  & 5 &0 & 1.000 & 1.000 & 1.000 & 1.000 &  0.998 &0.998 & 1.000 & 1.000 & 1.000 & 1.000 &   1.000 &1.000 
  			\\
  			& & CRIS & 6 &4 & 0.948 & 0.996 & 0.992 &0.992 &  0.944 &0.892 & 0.986 & 0.998 & 0.998 & 0.998 &  0.982 &0.966 
  			\\
  			& & CCRIS & 2000 &0 & 0.000 & 0.000 & 0.000 & 0.000 &  0.000 &0.000 & 0.000 & 0.000 & 0.000 & 0.000 & 0.000 &0.000 
  			\\
  			& & CSIRS & 5 &0 & 0.998 & 1.000 & 1.000 & 1.000 &  1.000 &0.998 & 1.000 & 1.000 & 1.000 & 1.000 & 1.000 &1.000 
  			\\
  			& & WSIRS & 28 &150 & 0.674 & 0.756 & 0.794 & 0.762 &0.630 &0.536 & 0.840 & 0.874 & 0.874 & 0.858 &0.832 &0.780 
  			\\
  			& & rCFRS & 5 &3 & 0.970 & 1.000 & 1.000 & 1.000 &  0.970 &0.940 & 0.996 & 1.000 & 1.000 & 1.000 & 0.998 &0.994 
  			\\
  			& & rCDCS & 6 &5 & 0.958 & 0.998 & 1.000 & 1.000 &  0.968 &0.926 & 0.996 & 1.000 & 1.000 & 1.000 & 0.992 &0.988 
  			\\
  			& & CRSIS & 5 &2 & 0.984 & 1.000 & 1.000 & 1.000 &  0.984 &0.968 & 0.998 & 1.000 & 1.000 & 1.000 & 0.998 &0.996 
  			\\
  			& & CDCS & 5 &0 & 1.000 & 1.000 & 1.000 & 1.000 &  1.000 &1.000 & 1.000 & 1.000 & 1.000 & 1.000 & 1.000 &1.000 
  			\\
  			& &  HSIC-SIS & 5 &2 & 0.988 & 0.998 & 1.000 & 1.000 &  0.984 &0.970 & 1.000 & 1.000 & 1.000 & 1.000 & 1.000 &1.000 
  			\\
  			& & XIM-SIS1  & 7 &9 &0.932 &0.994 &1.000 &0.992 & 0.920 &0.860 &0.988 &1.000 &1.000 &1.000 & 0.986 &0.974 
  			\\
  			& & XIM-SIS2  & 6 & 8 & 0.936 & 0.994 & 0.998 & 0.996 & 0.940 & 0.884 & 0.994 & 1.000 & 1.000 & 1.000 & 0.992 &0.986 
  			\\
  			& & XIM-SIS3  & 6 & 7 & 0.938 & 1.000 & 0.998 & 0.996 & 0.948 & 0.892 & 0.994 & 1.000 & 1.000 & 1.000 & 0.986 &0.980 \\ 
  			\bottomrule  
  \end{tabular} }
  \end{table}

  \begin{table}[htbp]
	\setlength{\tabcolsep}{26pt} 
	\centering
	\caption{The quantiles of the minimum model size $S$ for Example \ref{ex:cox}}
	\label{tab: cox size}
	\resizebox{\textwidth}{!}{ 
		\begin{tabular}{llrrrrrrrrrr} 
  \toprule 
			\multicolumn{1}{l}{\multirow{2}{*}{CR}}    & \multicolumn{1}{l}{\multirow{2}{*}{Method}} & \multicolumn{5}{l}{$n=100$} & \multicolumn{5}{l}{$n=200$} \\ \cmidrule(lr){3-7} \cmidrule(lr){8-12}
   &   & 5\% & 25\% & 50\% & 75\% & \multicolumn{1}{l}{95\%}   & 5\% & 25\% & 50\% & 75\% & 95\%  \\ \hline \addlinespace 
			\multirow[t]{12}{*}{30\%}& FAST& 5 & 5 & 6 & 9 & 45 & 5 & 5 & 5 & 5 &  6 
\\
			& CRIS& 6 &  12 & 41 & 138 & 616 & 5 & 5 & 5 &9 & 65 
\\
			& CCRIS& 1256 & 1988 & 2000 & 2000 & 2000 & 1760 & 2000 & 2000 & 2000 &2000 
\\
			& CSIRS& 5 & 5 & 5 & 9 & 52 & 5 & 5 & 5 & 5 &5 
\\
			& WSIRS& 5 & 6 & 10 & 36 & 284 & 5 & 5 & 5 & 6 &97 
\\
			& rCFRS& 5 & 5 & 8 & 19 & 138 & 5 & 5 & 5 & 5 &7 
\\
			& rCDCS& 5 & 6 & 9 & 27 & 180 & 5 & 5 & 5 & 5 &8 
\\
			& CRSIS& 5 & 5 & 7 & 15 & 102 & 5 & 5 & 5 & 5 &6 
\\
			& CDCS& 5 & 5 & 6 & 10 & 62 & 5 & 5 & 5 & 5 &6 
\\
			& HSIC-SIS& 5 & 8 & 20 & 69 & 366 & 5 & 5 & 5 & 6 &13 
\\
			& XIM-SIS1   &5 &6 &10 &25 &175 &5 &5 &5 &5 &12 
\\
 & XIM-SIS2   & 5 & 6 & 9 & 23 & 146 & 5 & 5 & 5 & 5 &9 
\\
 & XIM-SIS3   & 5 & 5 & 8 & 20 & 130 & 5 & 5 & 5 & 5 &8 \\
 \addlinespace 
		\multirow[t]{12}{*}{50\%}	& FAST& 5 & 5 & 8 & 17 & 102 & 5 & 5 & 5 & 5 &  6 
\\
			& CRIS& 6 & 12 & 41 & 127 & 546 & 5 & 5 & 6 & 9 & 139 
\\
			& CCRIS& 1973 & 1998 & 2000 & 2000 & 2000 & 2000 & 2000 & 2000 & 2000 &2000 
\\
			& CSIRS& 5 & 5 & 8 & 22 & 173 & 5 & 5 & 5 & 5 &7 
\\
			& WSIRS& 6 & 32 & 150 & 407 & 1133 & 5 & 8 & 28 & 158 &955 
\\
			& rCFRS& 5 & 13 & 40 & 116 & 618 & 5 & 5 & 5 & 8 &42 
\\
			& rCDCS& 6 & 16 & 53 & 166 & 836 & 5 & 5 & 6 & 10 &54 
\\
			& CRSIS& 5 & 10 & 34 & 103 & 439 & 5 & 5 & 5 & 7 &27 
\\
			& CDCS& 5 & 5 & 7 & 22 & 150 & 5 & 5 & 5 & 5 &6 
\\
			& HSIC-SIS& 5 & 9 & 29 & 117 & 594 & 5 & 5 & 5 & 7 &23 
\\
			& XIM-SIS1   &6 &18 &60 &180 &631 &5 &5 &7 &14 &117 
\\
 & XIM-SIS2   & 5 & 15 & 49 & 155 & 554 & 5 & 5 & 6 & 13 &101 
\\
 & XIM-SIS3   & 5 & 15 & 44 & 141 & 517 & 5 & 5 & 6 & 12 &95 \\ 
			\bottomrule  
	\end{tabular}}
	\parbox{\linewidth}{\footnotesize 
    }
\end{table}

  The simulation results, as shown in Tables \ref{tab:cox} and  \ref{tab: cox size}, indicate that when the chosen model size is $d_1=\lceil n / \log n\rceil$, the method CSIRS, FAST, and CDCS exhibit slightly superior performance compared to other methods when the sample size is 100, under both 30\% and 50\% censoring rates. However, this slight advantage diminishes as the sample size increases from 100 to 200. 
  Expect for CCRIS and WSIRS, the remaining methods demonstrate equally effective performance when the sample size is set to 200.    They both achieve a high  accuracy  for $\mathcal{P}_a$.
  Furthermore, as the censoring rate escalates from 30\% to 50\%, the efficiency of all methods  deteriorates.
  When  $d_2=n-1$, we have similar conclusions.  In general, our proposed XIM-SIS method could achieve comparable performance to existing screening methods for the traditional Cox's model at moderate sample sizes.

\begin{example}\label{ex:transformation} 
	Consider the survival time $T$ generated from the linear hazard model:   
	$$
	H(T)=-\beta^{\top} \boldsymbol{X}+\epsilon,
	$$
	where $H(t)=\log [0.5\{\exp (2 t)-1\}], \beta=(-1,-0.9,0,0,0,0,0,0,0.8,1, 0,\ldots, 0)^{\top}, p=2000$, $n=100, 200$, $s=4$. The ultra-high dimensional covariates  $X \sim N_p(0, \boldsymbol{\Sigma})$ with $\boldsymbol{\Sigma}=$ $\left(0.5^{|i-j|}\right)$ for $i, j=1, \ldots, p$. The error term $\epsilon$ is student's t distribution with degree of freedom 1. The censoring time $C \sim \operatorname{Unif}(0, c)$ with $c$ chosen to achieve censoring rate $CR=20\%, 40\%$.
\end{example}

\begin{table}[!htbp]
	\setlength{\tabcolsep}{16pt} 
	\centering
	\caption{The median and IQR of $\mathcal{S}$, the proportion of $\mathcal{P}_j$ and $\mathcal{P}_a$ for Example \ref{ex:transformation}}
	\label{tab:transformation}
	\resizebox{\textwidth}{!}{ 
		\begin{tabular}{lllrrllllllllll} 
			\toprule 
			\multicolumn{1}{l}{\multirow{2}{*}{CR}}   & \multicolumn{1}{l}{\multirow{2}{*}{$n$}} & \multicolumn{1}{l}{\multirow{2}{*}{Method}} &  \multicolumn{1}{r}{\multirow{2}{*}{Median}}&\multicolumn{1}{r}{\multirow{2}{*}{IQR}}& \multicolumn{5}{l}{$d_1=n/log(n)$} & \multicolumn{5}{l}{$d_2=n-1$} \\ 
			\cmidrule(lr){6-10} \cmidrule(lr){11-15}& &   & && $\mathcal{P}_1$ & $\mathcal{P}_2$ & $\mathcal{P}_9$ & $\mathcal{P}_{10}$   & \multicolumn{1}{l}{$\mathcal{P}_a$} & $\mathcal{P}_1$ & $\mathcal{P}_2$ & $\mathcal{P}_9$ & $\mathcal{P}_{10}$ &  $\mathcal{P}_a$ \\ \hline \addlinespace 
			\multirow[t]{24}{*}{20\%}& \multirow[t]{12}{*}{200}& FAST& 4 &3 & 0.994 & 0.998 & 0.972 & 0.988 & 0.958 & 1.000 & 1.000 & 0.998 & 0.998 &  0.996 
			\\
			& & CRIS& 34 &179 & 0.828 &  0.792 & 0.768 & 0.854 & 0.514 & 0.908 & 0.890 & 0.910 &0.956 & 0.758 
			\\
			& & CCRIS& 908 &1385 & 0.962 & 0.956 & 0.170 & 0.228 & 0.104 & 0.994 & 0.984 & 0.284 & 0.392 &0.218 
			\\
			& & CSIRS& 4 &0 & 1.000 & 1.000 & 0.990 & 0.998 & 0.988 & 1.000 & 1.000 & 0.998 & 1.000 &0.998 
			\\
			& & WSIRS& 4 &0 & 1.000 & 1.000 & 0.990 & 0.998 & 0.988 & 1.000 & 1.000 & 0.998 & 1.000 &0.998 
			\\
			& & rCFRS& 4 &1 & 1.000 & 0.996 & 0.982 & 0.996 & 0.974 & 1.000 & 1.000 & 0.996 & 1.000 &0.996 
			\\
			& & rCDCS& 4 &1 & 1.000 & 0.996 & 0.978 & 0.996 & 0.970 & 1.000 & 1.000 & 0.994 & 1.000 &0.994 
			\\
			& & CRSIS& 4 &1 & 1.000 & 0.996 & 0.984 & 0.996 & 0.976 & 1.000 & 1.000 & 0.998 & 1.000 &0.998 
			\\
			& & CDCS& 4 &1 & 0.998 & 0.998 & 0.978 & 0.996 & 0.970 & 1.000 & 1.000 & 0.996 & 1.000 &0.996 
			\\
			& &  HSIC-SIS& 5 &7 & 0.990 & 0.992 & 0.950 & 0.992 & 0.926 & 1.000 & 0.998 & 0.982 & 1.000 &0.980 
			\\
			& & XIM-SIS1   & 41 &131 &0.718 &0.654 &0.958 &0.984 &0.480 &0.902 &0.872 &0.990 &1.000 &0.800 
			\\
			& & XIM-SIS2  & 47 & 155 & 0.696 & 0.618 & 0.958 & 0.984 & 0.454 & 0.888 & 0.860 & 0.990 & 1.000 &0.784 
			\\
			& & XIM-SIS3  & 65 & 200 & 0.654 & 0.570 & 0.962 & 0.984 & 0.406 & 0.852 & 0.836 & 0.992 & 1.000 &0.740 
			\\ \addlinespace 
			& \multirow[t]{12}{*}{300}& FAST& 4 &0 & 1.000 & 1.000 & 0.998 & 1.000 & 0.998 & 1.000 & 1.000 & 1.000 & 1.000 &  1.000 
			\\
			& & CRIS& 7 &28 & 0.946 & 0.928 & 0.914 & 0.960 & 0.802 & 0.974 & 0.962 & 0.978 & 0.992 & 0.926 
			\\
			& & CCRIS& 956 &1496 & 0.998 & 0.986 & 0.192 & 0.262 & 0.134 & 1.000 & 0.996 & 0.382 & 0.458 &0.290 
			\\
			& & CSIRS& 4 &0 & 1.000 & 1.000 & 1.000 & 1.000 & 1.000 & 1.000 & 1.000 & 1.000 & 1.000 &1.000 
			\\
			& & WSIRS& 4 &0 & 1.000 & 1.000 & 1.000 & 1.000 & 1.000 & 1.000 & 1.000 & 1.000 & 1.000 &1.000 
			\\
			& & rCFRS& 4 &0 & 1.000 & 1.000 & 1.000 & 1.000 & 1.000 & 1.000 & 1.000 & 1.000 & 1.000 &1.000 
			\\
			& & rCDCS& 4 &0 & 1.000 & 1.000 & 1.000 & 1.000 & 1.000 & 1.000 & 1.000 & 1.000 & 1.000 &1.000 
			\\
			& & CRSIS& 4 &0 & 1.000 & 1.000 & 1.000 & 1.000 & 1.000 & 1.000 & 1.000 & 1.000 & 1.000 &1.000 
			\\
			& & CDCS& 4 &0 & 1.000 & 1.000 & 1.000 & 1.000 & 1.000 & 1.000 & 1.000 & 1.000 & 1.000 &1.000 
			\\
			& &  HSIC-SIS& 4 &0 & 1.000 & 1.000 & 0.998 & 1.000 & 0.998 & 1.000 & 1.000 & 1.000 & 1.000 &1.000 
			\\
			& & XIM-SIS1   & 6 &10 &0.950 &0.946 &1.000 &1.000 &0.904 &0.994 &0.988 &1.000 &1.000 &0.984 
			\\
			& & XIM-SIS2  & 6 & 13 & 0.946 & 0.944 & 1.000 & 1.000 & 0.896 & 0.994 & 0.988 & 1.000 & 1.000 &0.984 
			\\
			& & XIM-SIS3  & 6 & 17 & 0.942 & 0.922 & 1.000 & 1.000 & 0.874 & 0.990 & 0.988 & 1.000 & 1.000 &0.980 
			\\ \hline \addlinespace 
			\multirow[t]{24}{*}{40\%}& \multirow[t]{12}{*}{200}& FAST& 4 &1 & 0.998 & 0.998 & 0.988 & 0.996 & 0.980 & 1.000 & 0.998 & 1.000 & 1.000 &  0.998 
			\\
			& & CRIS& 93 &373 & 0.642 & 0.598 & 0.788 &0.878 & 0.348 & 0.788 & 0.756 & 0.934 & 0.972 & 0.644 
			\\
			& & CCRIS& 2000 &2 & 0.832 & 0.814 & 0.002 & 0.002 & 0.000 & 0.974 & 0.950 & 0.006 & 0.012 &0.000 
			\\
			& & CSIRS& 4 &1 & 1.000 & 0.998 & 0.984 & 0.998 & 0.980 & 1.000 & 1.000 & 0.998 & 1.000 &0.998 
			\\
			& & WSIRS& 56 &199 & 0.746 & 0.684 & 0.566 & 0.666 & 0.440 & 0.916 & 0.868 & 0.836 & 0.874 &0.734 
			\\
			& & rCFRS& 5 &3 & 1.000 & 0.994 & 0.950 & 0.984 & 0.932 & 1.000 & 0.996 & 0.988 & 0.996 &0.982 
			\\
			& & rCDCS& 5 &4 & 0.994 & 0.984 & 0.946 & 0.982 & 0.908 & 1.000 & 0.996 & 0.976 & 0.996 &0.968 
			\\
			& & CRSIS& 4 &3 & 1.000 & 0.994 & 0.960 & 0.986 & 0.944 & 1.000 & 0.996 & 0.992 & 0.996 &0.986 
			\\
			& & CDCS& 4 &1 & 1.000 & 0.998 & 0.986 & 0.996 & 0.980 & 1.000 & 1.000 & 0.996 & 1.000 &0.996 
			\\
			& &  HSIC-SIS& 6 &8 & 0.992 & 0.986 & 0.946 & 0.982 & 0.908 & 1.000 & 0.998 & 0.986 & 0.996 &0.980 
			\\
			& & XIM-SIS1   & 209 &526 &0.408 &0.370 &0.910 &0.966 &0.180 &0.706 &0.670 &0.972 &0.996 &0.488 
			\\
			& & XIM-SIS2  & 260 & 592 & 0.374 & 0.346 & 0.918 & 0.972 & 0.162 & 0.658 & 0.616 & 0.976 & 0.996 &0.432 
			\\
			& & XIM-SIS3  & 346 & 638 & 0.336 & 0.300 & 0.930 & 0.972 & 0.138 & 0.618 & 0.558 & 0.982 & 0.998 &0.378 
			\\ \addlinespace 
			& \multirow[t]{12}{*}{300}& FAST& 4 &0 & 1.000 & 1.000 & 1.000 & 1.000 & 1.000 & 1.000 & 1.000 & 1.000 & 1.000 &  1.000 
			\\
			& & CRIS& 14 &102 & 0.792 & 0.762 & 0.954 &0.982 & 0.670 & 0.902 & 0.896 & 0.986 & 0.994 & 0.846 
			\\
			& & CCRIS& 2000 &0 & 0.936 & 0.942 & 0.004 & 0.004 & 0.000 & 0.996 & 0.988 & 0.010 & 0.014 &0.002 
			\\
			& & CSIRS& 4 &0 & 1.000 & 1.000 & 1.000 & 1.000 & 1.000 & 1.000 & 1.000 & 1.000 & 1.000 &1.000 
			\\
			& & WSIRS& 16 &108 & 0.802 & 0.790 & 0.734 & 0.772 & 0.660 & 0.950 & 0.944 & 0.926 & 0.928 &0.880 
			\\
			& & rCFRS& 4 &0 & 1.000 & 0.998 & 1.000 & 0.998 & 0.996 & 1.000 & 1.000 & 1.000 & 1.000 &1.000 
			\\
			& & rCDCS& 4 &0 & 1.000 & 0.998 & 1.000 & 1.000 & 0.998 & 1.000 & 1.000 & 1.000 & 1.000 &1.000 
			\\
			& & CRSIS& 4 &0 & 1.000 & 1.000 & 0.998 & 1.000 & 0.998 & 1.000 & 1.000 & 1.000 & 1.000 &1.000 
			\\
			& & CDCS& 4 &0 & 1.000 & 1.000 & 1.000 & 1.000 & 1.000 & 1.000 & 1.000 & 1.000 & 1.000 &1.000 
			\\
			& &  HSIC-SIS& 4 &0 & 1.000 & 1.000 & 1.000 & 0.998 & 0.998 & 1.000 & 1.000 & 1.000 & 1.000 &1.000 
			\\
			& & XIM-SIS1   & 24 &118 &0.792 &0.740 &0.992 &0.998 &0.624 &0.934 &0.914 &1.000 &1.000 &0.862 
			\\
			& & XIM-SIS2  & 29 & 155 & 0.770 & 0.718 & 0.996 & 0.998 & 0.606 & 0.928 & 0.898 & 1.000 & 1.000 &0.838 
			\\
			& & XIM-SIS3  & 35 & 191 & 0.738 & 0.690 & 0.996 & 0.998 & 0.564 & 0.916 & 0.872 & 1.000 & 1.000 &0.810 \\
			\bottomrule  
	\end{tabular}}
\end{table}

\begin{table}[!htbp]
	\setlength{\tabcolsep}{26pt} 
	\centering
	\caption{The quantiles of the minimum model size $S$ for Example \ref{ex:transformation}}
	\label{tab:transformation size}
	\resizebox{\textwidth}{!}{ 
		\begin{tabular}{llrrrrrrrrrr} 
  \toprule 
			\multicolumn{1}{l}{\multirow{2}{*}{CR}}    & \multicolumn{1}{l}{\multirow{2}{*}{Method}} & \multicolumn{5}{l}{$n=200$} & \multicolumn{5}{l}{$n=300$} \\ \cmidrule(lr){3-7} \cmidrule(lr){8-12}
   &   & 5\% & 25\% & 50\% & 75\% & \multicolumn{1}{l}{95\%}   & 5\% & 25\% & 50\% & 75\% & 95\%  \\ \hline \addlinespace 
			\multirow[t]{12}{*}{20\%}& FAST& 4 & 4 & 4 & 7 & 33 & 4 & 4 & 4 & 4 &  6 
\\
			& CRIS& 4 &  10 & 34 & 189 & 1150 & 4 & 4 & 7 &32 & 435 
\\
			& CCRIS& 10 & 260 & 908 & 1645 & 1985 & 10 & 217 & 956 & 1713 &1986 
\\
			& CSIRS& 4 & 4 & 4 & 4 & 8 & 4 & 4 & 4 & 4 &4 
\\
			& WSIRS& 4 & 4 & 4 & 4 & 9 & 4 & 4 & 4 & 4 &4 
\\
			& rCFRS& 4 & 4 & 4 & 5 & 15 & 4 & 4 & 4 & 4 &5 
\\
			& rCDCS& 4 & 4 & 4 & 5 & 22 & 4 & 4 & 4 & 4 &5 
\\
			& CRSIS& 4 & 4 & 4 & 5 & 13 & 4 & 4 & 4 & 4 &5 
\\
			& CDCS& 4 & 4 & 4 & 5 & 17 & 4 & 4 & 4 & 4 &5 
\\
			& HSIC-SIS & 4 & 4 & 5 & 11 & 73 & 4 & 4 & 4 & 4 &10 
\\
			& XIM-SIS1   &4 &11 &41 &142 &697 &4 &4 &6 &14 &103 
\\
 & XIM-SIS2   & 4 & 13 & 47 & 168 & 697 & 4 & 4 & 6 & 17 &119 
\\
 & XIM-SIS3   & 4 & 15 & 65 & 215 & 827 & 4 & 4 & 6 & 21 &140 \\ 
 \addlinespace 
		\multirow[t]{12}{*}{40\%}	& FAST& 4 & 4 & 4 & 5 & 17 & 4 & 4 & 4 & 4 &  5 
\\
			& CRIS& 5 & 20 & 93 & 393 & 1828 & 4 & 5 & 14 & 107 & 1224 
\\
			& CCRIS& 1747 & 1998 & 2000 & 2000 & 2000 & 1957 & 2000 & 2000 & 2000 &2000 
\\
			& CSIRS& 4 & 4 & 4 & 5 & 14 & 4 & 4 & 4 & 4 &5 
\\
			& WSIRS& 4 & 14 & 56 & 213 & 658 & 4 & 5 & 16 & 113 &649 
\\
			& rCFRS& 4 & 4 & 5 & 7 & 55 & 4 & 4 & 4 & 4 &7 
\\
			& rCDCS& 4 & 4 & 5 & 8 & 82 & 4 & 4 & 4 & 4 &8 
\\
			& CRSIS& 4 & 4 & 4 & 7 & 41 & 4 & 4 & 4 & 4 &7 
\\
			& CDCS& 4 & 4 & 4 & 5 & 13 & 4 & 4 & 4 & 4 &4 
\\
			& HSIC-SIS& 4 & 4 & 6 & 12 & 59 & 4 & 4 & 4 & 4 &10 
\\
			& XIM-SIS1   &9 &65 &209 &591 &1290 &4 &7 &24 &125 &651 
\\
 & XIM-SIS2   & 11 & 80 & 260 & 671 & 1292 & 4 & 8 & 29 & 163 &707 
\\
 & XIM-SIS3   & 13 & 98 & 346 & 735 & 1467 & 4 & 9 & 35 & 200 &800 \\ 
			\bottomrule  
	\end{tabular}}
	\parbox{\linewidth}{\footnotesize 
    }
\end{table}

The simulation results are summarized in Table \ref{tab:transformation} and  \ref{tab:transformation size}. While the proposed XIM-SIS demonstrates superior performance over the model-free methods CCRIS and CRIS only when CR=20\% with $n=200$, the accuracy proportion $\mathcal{P}_a$   achieves consistently high levels as the sample size increases to $n=300$, regardless of the selected model size.
Furthermore, when the model size is set to $n-1$,  our proposed XIM-SIS demonstrates satisfactory performance, particularly at the sample size  $n=300$. Overall, the XIM-SIS method achieves performance comparable to existing screening methods for linear hazard models under moderate sample sizes.

\begin{example}\label{ex:aft}
	Consider the survival time $T$ generated  from the accelerated failure time model, which is another popular semiparametric model in survival analysis:
	$$
	\log (T)=X_1+0.8 X_2+X_7^2+\varepsilon
	$$
	where $\varepsilon \sim N(0, 1)$,  $X$ follows the same distribution as the covariates in Example \ref{ex:cox}. Same as Example \ref{ex:cox}, still set $p=2000$, $n=200$ and $300$, $CR=30\%, 50\%$.
\end{example}

\begin{table}[!htbp]
	\setlength{\tabcolsep}{20pt} 
	\centering
	\caption{The median and IQR of $\mathcal{S}$, the proportion of $\mathcal{P}_j$ and $\mathcal{P}_a$ for Example \ref{ex:aft}}
	\label{tab:aft}
	\resizebox{\textwidth}{!}{ 
		\begin{tabular}{lllrrllllllll}  \toprule 
			\multicolumn{1}{l}{\multirow{2}{*}{CR}}   & \multicolumn{1}{l}{\multirow{2}{*}{$n$}} & \multicolumn{1}{l}{\multirow{2}{*}{Method}} &  \multicolumn{1}{r}{\multirow{2}{*}{Median}}&\multicolumn{1}{r}{\multirow{2}{*}{IQR}} &  \multicolumn{4}{l}{$d_1=n/log(n)$} & \multicolumn{4}{l}{$d_2=n-1$} \\  \cmidrule(lr){6-9} \cmidrule(lr){10-13}
			& &  & &&$\mathcal{P}_1$ &$\mathcal{P}_2$ &$\mathcal{P}_7$ & \multicolumn{1}{l}{$\mathcal{P}_a$}   &$\mathcal{P}_1$ &$\mathcal{P}_2$ &$\mathcal{P}_7$ & $\mathcal{P}_a$ \\ \hline \addlinespace
			\multirow[t]{24}{*}{30\%}&\multirow[t]{12}{*}{200}& FAST & 754 &1127 &1.000 &1.000 &0.054 &0.054 &1.000 &1.000 &0.198 &0.198 
			\\
			& & CRIS  & 1067 &987 &0.960 &0.944 & 0.032 & 0.032 & 0.976 & 0.968 & 0.114 &0.114 
			\\
			& & CCRIS  & 147 &398 &0.996 & 0.996 & 0.234 & 0.228 & 1.000 & 0.996 & 0.576 & 0.572 
			\\
			& & CSIRS  & 1011 &1255 & 1.000 & 1.000 & 0.030 & 0.030 & 1.000 & 1.000 & 0.126 & 0.126 
			\\
			& & WSIRS & 879 &1134 & 0.982 & 0.974 & 0.022 & 0.022 & 0.992 & 0.990 & 0.146 & 0.146 
			\\
			& & rCFRS & 871 &1007 &1.000 &1.000 &0.032 & 0.032 &1.000 &1.000 &0.144 & 0.144 
			\\
			& & rCDCS & 19 &37 & 1.000 & 1.000 &0.700 & 0.700 & 1.000 & 1.000 & 0.982 & 0.982 
			\\
			& & CRSIS & 865 &1082 & 1.000 & 1.000 & 0.020 & 0.020 & 1.000 & 1.000 & 0.134 & 0.134 
			\\
			& & CDCS & 4 &3 & 1.000 & 1.000 & 0.984 & 0.984 & 1.000 & 1.000 & 1.000 & 1.000 
			\\
			& &  HSIC-SIS & 3 &0 & 1.000 & 1.000 & 0.996 & 0.996 & 1.000 & 1.000 & 1.000 & 1.000 
			\\
			& & XIM-SIS1  & 4 &14 &1.000 &0.998 &0.842 &0.840 &1.000 &0.998 &0.954 &0.952 
			\\
			& & XIM-SIS2  & 5 & 16 & 1.000 & 0.992 & 0.834 & 0.826 & 1.000 & 0.998 & 0.958 &0.956 
			\\
			& & XIM-SIS3  & 5 & 18 & 1.000 & 0.982 & 0.816 & 0.802 & 1.000 & 0.998 & 0.956 &0.954 
			\\ 
			\addlinespace
			& \multirow[t]{12}{*}{300}& FAST  & 750 &1011 & 1.000 & 1.000 & 0.106 & 0.106 & 1.000 & 1.000 & 0.282 & 0.282 
			\\
			& & CRIS & 1056 &974 & 0.994 & 0.992 & 0.038 & 0.038 & 0.996 & 0.996 & 0.150 & 0.150 
			\\
			& & CCRIS & 75 &235 & 0.996 & 0.996 & 0.442 & 0.442 & 0.998 & 0.998 & 0.784 & 0.782 
			\\
			& & CSIRS & 970 &1110 & 1.000 & 1.000 & 0.064 & 0.064 & 1.000 & 1.000 & 0.216 & 0.216 
			\\
			& & WSIRS & 824 &1007 & 0.966 & 0.968 &0.056 & 0.054 &0.984 & 0.988 & 0.234 & 0.226 
			\\
			& & rCFRS & 889 &1071 & 1.000 & 1.000 & 0.056 & 0.056 & 1.000 & 1.000 & 0.210 & 0.210 
			\\
			& & rCDCS & 6 &6 & 1.000 & 1.000 & 0.976 & 0.976 & 1.000 & 1.000 & 1.000 & 1.000 
			\\
			& & CRSIS & 928 &1067 & 1.000 & 1.000 & 0.048 & 0.048 & 1.000 & 1.000 & 0.192 & 0.192 
			\\
			& & CDCS & 4 &1 & 1.000 & 1.000 & 0.998 & 0.998 & 1.000 & 1.000 & 1.000 & 1.000 
			\\
			& &  HSIC-SIS & 3 &0 & 1.000 & 1.000 & 1.000 & 1.000 & 1.000 & 1.000 & 1.000 & 1.000 
			\\
			& & XIM-SIS1  & 3 &0 &1.000 &1.000 &0.992 &0.992 &1.000 &1.000 &1.000 &1.000 
			\\
			& & XIM-SIS2  & 3 & 0 & 1.000 & 1.000 & 0.984 & 0.984 & 1.000 & 1.000 & 1.000 &1.000 
			\\
			& & XIM-SIS3  & 3 & 0 & 1.000 & 1.000 & 0.982 & 0.982 & 1.000 & 1.000 & 1.000 &1.000 
			\\
			\hline 
			\addlinespace 
			\multirow[t]{24}{*}{50\%}&\multirow[t]{12}{*}{200}& FAST  & 911 &1044 & 1.000 & 1.000 & 0.022 & 0.022 & 1.000 & 1.000 & 0.132 & 0.132 
			\\
			& & CRIS & 1349 &857 & 0.646 &0.572 & 0.014 &0.010 &0.740 & 0.706 &0.072 &0.042 
			\\
			& & CCRIS & 507 &698 & 0.944 & 0.920 & 0.072 & 0.064 & 0.982 & 0.968 & 0.272 & 0.262 
			\\
			& & CSIRS & 1238 &1164 & 1.000 & 1.000 & 0.020 & 0.020 & 1.000 & 1.000 & 0.086 & 0.086 
			\\
			& & WSIRS & 1197 &971 & 0.862 & 0.836 &0.012 & 0.012 & 0.924 & 0.928 & 0.090 & 0.078 
			\\
			& & rCFRS & 941 &1047 & 1.000 &1.000 & 0.024 &0.024 & 1.000 & 1.000 & 0.126 & 0.126 
			\\
			& & rCDCS & 66 &127 & 1.000 & 1.000 &0.344 & 0.344 & 1.000 & 1.000 & 0.808 &0.808 
			\\
			& & CRSIS & 949 &1031 & 1.000 & 1.000 &0.018 & 0.018 & 1.000 & 1.000 & 0.130 & 0.130 
			\\
			& & CDCS & 7 &11 & 1.000 & 1.000 & 0.908 & 0.908 & 1.000 & 1.000 & 0.994 & 0.994 
			\\
			& &  HSIC-SIS & 3 &1 & 1.000 & 1.000 & 0.984 & 0.984 & 1.000 & 1.000 & 1.000 & 1.000 
			\\
			& & XIM-SIS1  & 57 &244 &0.972 &0.926 &0.460 &0.434 &0.996 &0.984 &0.724 &0.710 
			\\
			& & XIM-SIS2  & 64 & 244 & 0.968 & 0.904 & 0.450 & 0.406 & 0.992 & 0.982 & 0.722 &0.700 
			\\
			& & XIM-SIS3  & 72 & 271 & 0.950 & 0.878 & 0.436 & 0.378 & 0.992 & 0.966 & 0.704 &0.678 
			\\ 
			\addlinespace
			& \multirow[t]{12}{*}{300}& FAST  & 898 &1010 & 1.000 & 1.000 & 0.054 & 0.054 & 1.000 & 1.000 & 0.200 & 0.200 
			\\
			& & CRIS & 1336 &857 & 0.820 & 0.746 & 0.014 &0.012 & 0.894 & 0.858 & 0.082 & 0.074 
			\\
			& & CCRIS & 382 &616 & 0.974 & 0.972 & 0.102 & 0.102 & 0.992 & 0.990 & 0.426 & 0.426 
			\\
			& & CSIRS & 1116 &1116 & 1.000 & 1.000 & 0.040 & 0.040 & 1.000 & 1.000 & 0.156 & 0.156 
			\\
			& & WSIRS & 1094 &952 & 0.892 & 0.888 & 0.024 & 0.024 & 0.956 & 0.958 & 0.142 & 0.136 
			\\
			& & rCFRS & 1013 &1017 & 1.000 & 1.000 & 0.048 & 0.048 & 1.000 & 1.000 & 0.166 & 0.166 
			\\
			& & rCDCS & 22 &42 & 1.000 & 1.000 & 0.754 & 0.754 & 1.000 & 1.000 & 0.994 & 0.994 
			\\
			& & CRSIS & 1039 &1012 & 1.000 & 1.000 & 0.046 & 0.046 & 1.000 & 1.000 & 0.146 & 0.146 
			\\
			& & CDCS & 4 &1 & 1.000 & 1.000 & 0.998 & 0.998 & 1.000 & 1.000 & 1.000 & 1.000 
			\\
			& &  HSIC-SIS & 3 &1 & 1.000 & 1.000 & 1.000 & 1.000 & 1.000 & 1.000 & 1.000 & 1.000 
			\\
			& & XIM-SIS1  & 8 &37 &1.000 &1.000 &0.766 &0.766 &1.000 &1.000 &0.958 &0.958 
			\\
			& & XIM-SIS2  & 8 & 43 & 1.000 & 1.000 & 0.760 & 0.760 & 1.000 & 1.000 & 0.954 &0.954 
			\\
			& & XIM-SIS3  & 9 & 39 & 1.000 & 1.000 & 0.762 & 0.762 & 1.000 & 1.000 & 0.948 &0.948 \\
			\bottomrule    
\end{tabular}}
\end{table}

\begin{table}[!htbp]
	\setlength{\tabcolsep}{26pt} 
	\centering
	\caption{The quantiles of the minimum model size $S$ for Example \ref{ex:aft}}
	\label{tab: aft size}
	\resizebox{\textwidth}{!}{ 
		\begin{tabular}{llrrrrrrrrrr} 
  \toprule 
			\multicolumn{1}{l}{\multirow{2}{*}{CR}}    & \multicolumn{1}{l}{\multirow{2}{*}{Method}} & \multicolumn{5}{l}{$n=200$} & \multicolumn{5}{l}{$n=300$} \\ \cmidrule(lr){3-7} \cmidrule(lr){8-12}
   &   & 5\% & 25\% & 50\% & 75\% & \multicolumn{1}{l}{95\%}   & 5\% & 25\% & 50\% & 75\% & 95\%  \\ \hline \addlinespace 
			\multirow[t]{12}{*}{30\%}& FAST& 36 & 267 & 754 & 1394 & 1880 & 20 & 249 & 750 & 1260 &  1895 
\\
			& CRIS& 81 &  532 & 1067 & 1519 & 1947 & 85 & 557 & 1056 &1531 & 1895 
\\
			& CCRIS& 7 & 46 & 147 & 444 & 1130 & 5 & 18 & 75 & 253 &830 
\\
			& CSIRS& 60 & 410 & 1011 & 1665 & 1973 & 36 & 424 & 970 & 1534 &1939 
\\
			& WSIRS& 87 & 348 & 879 & 1482 & 1925 & 51 & 333 & 824 & 1340 &1881 
\\
			& rCFRS& 62 & 396 & 871 & 1403 & 1896 & 43 & 367 & 889 & 1438 &1844 
\\
			& rCDCS& 4 & 8 & 19 & 45 & 127 & 4 & 5 & 6 & 11 &29 
\\
			& CRSIS& 72 & 401 & 865 & 1482 & 1872 & 56 & 401 & 928 & 1468 &1873 
\\
			& CDCS& 3 & 3 & 4 & 6 & 18 & 3 & 3 & 4 & 4 &5 
\\
			& HSIC-SIS& 3 & 3 & 3 & 3 & 6 & 3 & 3 & 3 & 3 &4 
\\
			& XIM-SIS1   &3 &3 &4 &17 &185 &3 &3 &3 &3 &9 
\\
 & XIM-SIS2   & 3 & 3 & 5 & 19 & 163 & 3 & 3 & 3 & 3 &10 
\\
 & XIM-SIS3   & 3 & 3 & 5 & 21 & 170 & 3 & 3 & 3 & 3 &10 \\
 \addlinespace 
		\multirow[t]{12}{*}{50\%}	& FAST& 90 & 447 & 911 & 1490 & 1901 & 48 & 392 & 898 & 1402 &  1837 
\\
			& CRIS& 216 & 879 & 1349 & 1737 & 1974 & 222 & 823 & 1336 & 1680 & 1952 
\\
			& CCRIS& 27 & 191 & 507 & 889 & 1496 & 19 & 156 & 382 & 772 &1306 
\\
			& CSIRS& 119 & 574 & 1238 & 1738 & 1968 & 64 & 533 & 1116 & 1649 &1962 
\\
			& WSIRS& 128 & 639 & 1197 & 1609 & 1955 & 100 & 580 & 1094 & 1532 &1918 
\\
			& rCFRS& 71 & 419 & 941 & 1466 & 1913 & 62 & 478 & 1013 & 1495 &1909 
\\
			& rCDCS& 9 & 26 & 66 & 153 & 470 & 5 & 10 & 22 & 52 &137 
\\
			& CRSIS& 90 & 443 & 949 & 1473 & 1854 & 54 & 505 & 1039 & 1517 &1883 
\\
			& CDCS& 4 & 5 & 7 & 16 & 55 & 3 & 4 & 4 & 5 &10 
\\
			& HSIC-SIS& 3 & 3 & 3 & 4 & 11 & 3 & 3 & 3 & 4 &4 
\\
			& XIM-SIS1   &3 &12 &57 &256 &949 &3 &3 &8 &40 &242 
\\
 & XIM-SIS2   & 3 & 15 & 64 & 259 & 962 & 3 & 3 & 8 & 46 &279 
\\
 & XIM-SIS3   & 3 & 19 & 72 & 290 & 1109 & 3 & 4 & 9 & 43 &309 \\ 
			\bottomrule  
	\end{tabular}}
	\parbox{\linewidth}{\footnotesize 
    }
\end{table}

The results of the simulations are displayed in Table \ref{tab:aft} and  \ref{tab: aft size}. 
In this example, the linear effect signals significantly outweigh the quadratic effect signals, making the quadratic effect variable $X_7$ particularly challenging to detect. As evidenced by the results, the selection probability $\mathcal{P}_7$ remains substantially low.
 Consequently, the methods including FAST, CRIS, CCRIS, CSIRS, WSIRS, rCFRS, rCDCS, and CRSIS fail entirely across all parameter settings.
In contrast, CDCS, HISC-SIS, and our proposed XIM-SIS method maintain consistently high accuracy throughout the experiments. Notably, the selection probability  $\mathcal{P}_7$ rapidly converges to 1 as the sample size grows from 200 to 300. Furthermore, we observe that higher censoring rates consistently degrade the performance across all methods.

\begin{example}\label{ex:log exp}
	Consider the survival time $T$ generated from the nonlinear survival model:
	$$
	\log (T)=1.5-\exp \left(-X_1-0.8 X_2-X_7\right) * \varepsilon,
	$$
	where $\varepsilon \sim N(0, 1)$, $X \sim N_p(0, \boldsymbol{\Sigma})$ with covariance matrix $\Sigma=\left(\sigma_{l m}\right)_{p \times p}$, where $\sigma_{l l}=1$ and $\sigma_{l m}=0.5$ for $l \neq m$. Same as Example \ref{ex:cox}, set $n=200$ and $300$, $p=2000$, $CR=30\%, 50\%$.
\end{example}

\begin{table}[!htbp]
	\setlength{\tabcolsep}{21pt} 
	\centering
	\caption{The median and IQR of $\mathcal{S}$, the proportion of $\mathcal{P}_j$ and $\mathcal{P}_a$ for Example \ref{ex:log exp}}
	\label{tab:log exp}
	\resizebox{\textwidth}{!}{ 
		\begin{tabular}{lllrrllllllll}   \toprule 
			\multicolumn{1}{l}{\multirow{2}{*}{CR}}   & \multicolumn{1}{l}{\multirow{2}{*}{$n$}} & \multicolumn{1}{l}{\multirow{2}{*}{Method}} &  \multicolumn{1}{r}{\multirow{2}{*}{Median}}&\multicolumn{1}{r}{\multirow{2}{*}{IQR}}& \multicolumn{4}{l}{$d_1=n/log(n)$} & \multicolumn{4}{l}{$d_2=n-1$} \\ \cmidrule(lr){6-9} \cmidrule(lr){10-13}
			& &  & &&$\mathcal{P}_1$ &$\mathcal{P}_2$ &$\mathcal{P}_7$ &\multicolumn{1}{l}{$\mathcal{P}_a$}   &$\mathcal{P}_1$ &$\mathcal{P}_2$ &$\mathcal{P}_7$ & $\mathcal{P}_a$ \\ \hline \addlinespace
			\multirow[t]{24}{*}{20\%}& \multirow[t]{12}{*}{200}&  FAST & 805 &670 &0.180 &0.124 &0.180 &0.002 &0.440 &0.396 &0.450 &0.070 
			\\
			& & CRIS  & 1547 &614 &0.020 &0.028 & 0.022 & 0.000 & 0.118 & 0.110 & 0.100 &0.000 
			\\
			& & CCRIS  & 1988 &71 &0.002 & 0.000 & 0.006 & 0.000 & 0.012 & 0.012 & 0.012 & 0.000 
			\\
			& & CSIRS  & 7 &17 & 0.984 & 0.892 & 0.980 & 0.866 & 0.998 & 0.986 & 1.000 & 0.984 
			\\
			& & WSIRS & 7 &19 & 0.978 & 0.870 & 0.972 & 0.844 & 0.998 & 0.978 & 0.996 & 0.976 
			\\
			& & rCFRS & 1508 &634 &0.016 &0.020 &0.034 & 0.000 &0.150 &0.116 &0.144 & 0.000 
			\\
			& & rCDCS & 11 &24 & 0.974 & 0.844 &0.980 & 0.812 & 0.998 & 0.972 & 0.998 & 0.968 
			\\
			& & CRSIS & 1537 &664 & 0.020 & 0.028 & 0.030 & 0.000 & 0.142 & 0.116 & 0.136 & 0.002 
			\\
			& & CDCS & 176 &298 & 0.596 & 0.436 & 0.580 & 0.140 & 0.856 & 0.744 & 0.850 & 0.538 
			\\
			& &  HSIC-SIS & 16 &46 & 0.942 & 0.778 & 0.938 & 0.692 & 0.990 & 0.958 & 0.992 & 0.940 
			\\
			& & XIM-SIS1  & 4 &6 &0.992 &0.934 &0.992 &0.918 &1.000 &0.988 &1.000 &0.988 
			\\
			& & XIM-SIS2  & 4 & 7 & 0.992 & 0.928 & 0.994 & 0.916 & 1.000 & 0.986 & 1.000 &0.986 
			\\
			& & XIM-SIS3  & 4 & 6 & 0.996 & 0.930 & 0.994 & 0.920 & 1.000 & 0.986 & 1.000 &0.986 
			\\ 
			\addlinespace
			& \multirow[t]{12}{*}{300}& FAST  & 625 &745 & 0.292 & 0.244 & 0.314 & 0.030 & 0.656 & 0.576 & 0.646 & 0.256 
			\\
			& & CRIS & 1529 &633 & 0.022 & 0.038 & 0.036 & 0.000 & 0.162 & 0.152 & 0.170 & 0.004 
			\\
			& & CCRIS & 1996 &34 & 0.002 & 0.004 & 0.000 & 0.000 & 0.014 & 0.024 & 0.010 & 0.000 
			\\
			& & CSIRS & 3 &2 & 1.000 & 0.980 & 1.000 & 0.980 & 1.000 & 1.000 & 1.000 & 1.000 
			\\
			& & WSIRS & 3 &2 & 0.996 & 0.966 &0.994 & 0.964 &1.000 & 1.000 & 1.000 & 1.000 
			\\
			& & rCFRS & 1491 &661 & 0.044 & 0.034 & 0.042 & 0.000 & 0.200 & 0.166 & 0.230 & 0.020 
			\\
			& & rCDCS & 3 &3 & 0.998 & 0.976 & 1.000 & 0.974 & 1.000 & 1.000 & 1.000 & 1.000 
			\\
			& & CRSIS & 1499 &676 & 0.034 & 0.024 & 0.064 & 0.000 & 0.208 & 0.172 & 0.216 & 0.008 
			\\
			& & CDCS & 73 &163 & 0.836 & 0.638 & 0.830 & 0.426 & 0.970 & 0.888 & 0.968 & 0.838 
			\\
			& &  HSIC-SIS & 4 &10 & 0.990 & 0.938 & 0.990 & 0.918 & 1.000 & 0.994 & 1.000 & 0.994 
			\\
			& & XIM-SIS1  & 3 &0 &1.000 &1.000 &1.000 &1.000 &1.000 &1.000 &1.000 &1.000 
			\\
			& & XIM-SIS2  & 3 & 0 & 1.000 & 1.000 & 1.000 & 1.000 & 1.000 & 1.000 & 1.000 &1.000 
			\\
			& & XIM-SIS3  & 3 & 0 & 1.000 & 0.998 & 1.000 & 0.998 & 1.000 & 1.000 & 1.000 &1.000 
			\\ 
			\hline
			\addlinespace 
			\multirow[t]{24}{*}{40\%}& \multirow[t]{12}{*}{200}& FAST  & 1453 &742 & 0.034 & 0.040 & 0.044 & 0.000 & 0.166 & 0.172 & 0.162 & 0.004 
			\\
			& & CRIS & 1563 &597 & 0.026 &0.028 & 0.022 &0.000 &0.104 & 0.106 &0.100 &0.000 
			\\
			& & CCRIS & 1235 &754 & 0.096 & 0.088 & 0.094 & 0.000 & 0.296 & 0.242 & 0.272 & 0.016 
			\\
			& & CSIRS & 58 &134 & 0.812 & 0.600 & 0.794 & 0.398 & 0.964 & 0.882 & 0.954 & 0.810 
			\\
			& & WSIRS & 147 &278 & 0.604 & 0.462 &0.590 & 0.188 & 0.852 & 0.746 & 0.854 & 0.576 
			\\
			& & rCFRS & 1318 &725 & 0.068 &0.042 & 0.064 &0.000 & 0.230 & 0.198 & 0.214 & 0.002 
			\\
			& & rCDCS & 131 &235 & 0.616 & 0.354 &0.598 & 0.170 & 0.888 & 0.778 & 0.920 &0.642 
			\\
			& & CRSIS & 1337 &744 & 0.062 & 0.040 &0.056 & 0.000 & 0.246 & 0.186 & 0.218 & 0.002 
			\\
			& & CDCS & 49 &133 & 0.808 & 0.610 & 0.822 & 0.428 & 0.964 & 0.870 & 0.968 & 0.814 
			\\
			& &  HSIC-SIS & 126 &237 & 0.662 & 0.404 & 0.632 & 0.166 & 0.910 & 0.772 & 0.898 & 0.634 
			\\
			& & XIM-SIS1  & 41 &114 &0.848 &0.662 &0.868 &0.494 &0.952 &0.886 &0.980 &0.830 
			\\
			& & XIM-SIS2  & 41 & 123 & 0.848 & 0.650 & 0.862 & 0.474 & 0.954 & 0.880 & 0.978 &0.822 
			\\
			& & XIM-SIS3  & 48 & 128 & 0.840 & 0.634 & 0.860 & 0.462 & 0.962 & 0.874 & 0.974 &0.818 
			\\ 
			\addlinespace
			& \multirow[t]{12}{*}{300}& FAST  & 1355 &735 & 0.060 & 0.064 & 0.082 & 0.000 & 0.266 & 0.266 & 0.260 & 0.026 
			\\
			& & CRIS & 1622 &568 & 0.026 & 0.032 & 0.030 &0.000 & 0.166 & 0.164 & 0.146 & 0.004 
			\\
			& & CCRIS & 1127 &772 & 0.138 & 0.108 & 0.160 & 0.004 & 0.430 & 0.346 & 0.446 & 0.068 
			\\
			& & CSIRS & 14 &40 & 0.974 & 0.820 & 0.966 & 0.772 & 0.998 & 0.980 & 1.000 & 0.978 
			\\
			& & WSIRS & 54 &167 & 0.836 & 0.614 & 0.832 & 0.498 & 0.958 & 0.884 & 0.970 & 0.848 
			\\
			& & rCFRS & 1201 &792 & 0.124 & 0.074 & 0.118 & 0.004 & 0.342 & 0.284 & 0.368 & 0.042 
			\\
			& & rCDCS & 44 &101 & 0.888 & 0.674 & 0.892 & 0.548 & 0.992 & 0.952 & 0.990 & 0.936 
			\\
			& & CRSIS & 1233 &804 & 0.110 & 0.070 & 0.106 & 0.002 & 0.346 & 0.278 & 0.370 & 0.028 
			\\
			& & CDCS & 13 &35 & 0.980 & 0.822 & 0.982 & 0.798 & 0.998 & 0.986 & 1.000 & 0.984 
			\\
			& &  HSIC-SIS & 43 &92 & 0.910 & 0.666 & 0.906 & 0.566 & 0.984 & 0.950 & 0.992 & 0.926 
			\\
			& & XIM-SIS1  & 7 &23 &0.988 &0.906 &0.982 &0.882 &0.998 &0.986 &0.998 &0.982 
			\\
			& & XIM-SIS2  & 8 & 24 & 0.988 & 0.888 & 0.982 & 0.862 & 0.996 & 0.986 & 0.998 &0.982 
			\\
			& & XIM-SIS3  & 8 & 25 & 0.986 & 0.892 & 0.978 & 0.862 & 0.998 & 0.988 & 0.998 &0.984 \\ 
			\bottomrule   
\end{tabular}}
\end{table}

\begin{table}[!htbp]
	\setlength{\tabcolsep}{26pt} 
	\centering
	\caption{The quantiles of the minimum model size $S$ for Example \ref{ex:log exp}}
	\label{tab: log exp size}
	\resizebox{\textwidth}{!}{ 
		\begin{tabular}{llrrrrrrrrrr} 
  \toprule 
			\multicolumn{1}{l}{\multirow{2}{*}{CR}}    & \multicolumn{1}{l}{\multirow{2}{*}{Method}} & \multicolumn{5}{l}{$n=200$} & \multicolumn{5}{l}{$n=300$} \\ \cmidrule(lr){3-7} \cmidrule(lr){8-12}
   &   & 5\% & 25\% & 50\% & 75\% & \multicolumn{1}{l}{95\%}   & 5\% & 25\% & 50\% & 75\% & 95\%  \\ \hline \addlinespace 
			\multirow[t]{12}{*}{20\%}& FAST& 175 & 466 & 805 & 1136 & 1711 & 83 & 290 & 625 & 1035 &  1556 
\\
			& CRIS& 729 &  1211 & 1547 & 1824 & 1962 & 662 & 1159 & 1529 &1792 & 1963 
\\
			& CCRIS& 1659 & 1928 & 1988 & 1999 & 2000 & 1800 & 1966 & 1996 & 2000 &2000 
\\
			& CSIRS& 3 & 4 & 7 & 21 & 107 & 3 & 3 & 3 & 5 &25 
\\
			& WSIRS& 3 & 4 & 7 & 23 & 118 & 3 & 3 & 3 & 5 &37 
\\
			& rCFRS& 608 & 1162 & 1508 & 1796 & 1961 & 583 & 1106 & 1491 & 1766 &1961 
\\
			& rCDCS& 3 & 4 & 11 & 28 & 128 & 3 & 3 & 3 & 6 &30 
\\
			& CRSIS& 593 & 1142 & 1537 & 1806 & 1975 & 571 & 1099 & 1499 & 1775 &1961 
\\
			& CDCS& 18 & 70 & 176 & 368 & 896 & 6 & 24 & 73 & 187 &627 
\\
			& HSIC-SIS& 3 & 5 & 16 & 51 & 229 & 3 & 3 & 4 & 13 &69 
\\
			& XIM-SIS1   &3 &3 &4 &9 &54 &3 &3 &3 &3 &8 
\\
 & XIM-SIS2   & 3 & 3 & 4 & 10 & 54 & 3 & 3 & 3 & 3 &8 
\\
 & XIM-SIS3   & 3 & 3 & 4 & 9 & 61 & 3 & 3 & 3 & 3 &8 \\  
\addlinespace 
		\multirow[t]{12}{*}{40\%}	& FAST& 476 & 1029 & 1453 & 1771 & 1953 & 433 & 945 & 1355 & 1680 &  1943 
\\
			& CRIS& 803 & 1204 & 1563 & 1800 & 1971 & 693 & 1271 & 1622 & 1839 & 1966 
\\
			& CCRIS& 328 & 850 & 1235 & 1604 & 1929 & 261 & 754 & 1127 & 1525 &1919 
\\
			& CSIRS& 5 & 18 & 58 & 152 & 474 & 3 & 5 & 14 & 45 &209 
\\
			& WSIRS& 12 & 56 & 147 & 334 & 884 & 5 & 17 & 54 & 184 &591 
\\
			& rCFRS& 448 & 919 & 1318 & 1644 & 1957 & 346 & 797 & 1201 & 1589 &1903 
\\
			& rCDCS& 15 & 57 & 131 & 292 & 687 & 5 & 15 & 44 & 116 &336 
\\
			& CRSIS& 502 & 893 & 1337 & 1637 & 1924 & 368 & 790 & 1233 & 1594 &1892 
\\
			& CDCS& 5 & 18 & 49 & 151 & 470 & 3 & 5 & 13 & 40 &183 
\\
			& HSIC-SIS& 13 & 56 & 126 & 293 & 814 & 4 & 14 & 43 & 106 &460 
\\
			& XIM-SIS1   &4 &13 &41 &127 &524 &3 &4 &7 &27 &136 
\\
 & XIM-SIS2   & 4 & 13 & 41 & 136 & 520 & 3 & 4 & 8 & 28 &129 
\\
 & XIM-SIS3   & 4 & 13 & 48 & 141 & 509 & 3 & 4 & 8 & 29 &131 \\ 
			\bottomrule  
	\end{tabular}}
	\parbox{\linewidth}{\footnotesize 
    }
\end{table}

The simulation results are presented in Table \ref{tab:log exp} and  \ref{tab: log exp size}. From these results, we can find that FAST, CRIS, CCRIS, rCFRS and  CRSIS completely fail whether  $n=200$ or $300$, and $CR=20\%$, or $40\%$.  Moreover, their performance could not be enhanced by either increasing the sample size or decreasing the censoring rate.   For our proposed method "XIM-SIS", it surpasses other models across all parameter configurations, also boasting the smallest median and the narrowest interquartile range (IQR) of the minimum model size $\mathcal{S}$.

\section{Real Data Analysis}

This dataset includes expression values of 24481 genes in 295 consecutive women with breast cancer. Diagnoses were made between 1984 and 1995, and the data were sourced from the fresh-frozen tissue bank of the Netherlands Cancer Institute \citep{van2002gene}. We utilized a filtered version by \cite{michiels2005prediction} reducing to 4948 genes. The dataset is downloaded from the R package \textit{‘cancerdata’}. At the end of the follow-up, 216 of the 295 patients were still alive, yielding a censoring rate of $73.22 \%$. The duration until death or censoring spanned from 0.05 to 18.3 years, with a median duration of 7.2 years. In this dataset, there are 6 samples with 21 missing gene expressions. We use the same method as \cite{chen2018robust}, impute the missing values using the weighted $K$-nearest neighbor method \citep{troyanskaya2001missing} with $K=15$. 

We randomly divide  the data into a training set with a sample size of  $n_{\text {tra }}=150$ and a test set with a sample size of $n_{\text {tes }}=142$,  repeating this procedure 100 times.  For each split,  the selected model size is considered as $\lceil n_{\text{tra}} / \log n_{\text{tra}}\rceil$ in the training model. After  the screening procedure, we built a Cox Proportional Hazard model with the penalized partial likelihood and LASSO to further eliminate irrelevant variables. LASSO is used due to considering that some truly unimportant predictors may still be retained in the screening stage. We use 10-fold cross validation to select the tuning parameter in the penalization procedure.

We apply the proposed XIM-SIS method alongside the other ten approaches considered in Section 3 to this real data. To evaluate the predictive performance  of  different  methods in the censored survival data,  we use  the C-statistic \citep{uno2011c},  which  is defined as:
$$C_n=P\left(\hat{\boldsymbol{\beta}}^{\top} \boldsymbol{X}_i>\hat{\boldsymbol{\beta}}^{\top} \boldsymbol{X}_j \mid T_i<T_j\right)$$
where $\boldsymbol{X}_i$ and $\boldsymbol{X}_j$ represent the selected variables, $T_i$ and $T_j$ denote the survival time for the $i$th and $j$th patients, subject to right censoring, and $\hat{\boldsymbol{\beta}}$ is the parameter estimate in the resulting Cox model. The term $\hat{\boldsymbol{\beta}}^{\top} \boldsymbol{X}_i$ is the risk score for the $i$th patient. Specifically, interpretations of C- statistics are as follows: $C_n=0.5$ implies no predictive power of the model; $0.5<C_n \leq 0.7$ implies moderate predictive power; $0.7<C_n \leq 0.9$ implies moderate to strong predictive power; and $0.9<C_n \leq 1$ implies strong predictive power \citep{zhou2017model}. We repeat the procedures 100 times and report the average C-statistics and standard errors in Tables \ref{tab: Breast}-\ref{tab: Breast gene lasso}.

\begin{table}[htbp]
		\centering
		\caption{Average and SD of C-Statistic for Different Screening Procedures on Breast Cancer Data}
		\label{tab: Breast}
		\resizebox{\textwidth}{!}{ 
		\begin{tabular}{llllllllllllll}
			\toprule Method & FAST & CRIS & CCRIS & CSIRS & WSIRS & rCFRS & rCDCS & CRSIS & CDCS &  HSIC-SIS & XIM-SIS1 & XIM-SIS2 & XIM-SIS3 \\ \hline \addlinespace
			C-statistic & 0.790 & 0.796 & 0.780 & 0.786 & 0.794 & 0.785 & 0.786 & 0.797 & 0.778 & 0.796 
&   0.792 &0.799 &0.790 
\\
			SD & 0.047 & 0.044 & 0.050 & 0.047 & 0.043 & 0.049 & 0.049 & 0.039 & 0.045 & 0.043 &   0.046 &0.045 &0.049 \\ \bottomrule
		\end{tabular}}
  \parbox{\linewidth}{\footnotesize We choose $M=[\sqrt{n_{tra}}]-1=11$ for XIM-SIS1, $M=[\sqrt{n_{tra}}]=12$ for XIM-SIS2, $M=[\sqrt{n_{tra}}]+1=13$ for XIM-SIS3}
	\end{table}

\begin{table}[htbp]
		\centering
		\caption{Average and SD of C-Statistic for Different Screening Procedures with LASSO on Breast Cancer Data}
		\label{tab: Breast lasso}
		\resizebox{\textwidth}{!}{ 
		\begin{tabular}{llllllllllllll}
			\toprule Method & FAST & CRIS & CCRIS & CSIRS & WSIRS & rCFRS & rCDCS & CRSIS & CDCS &HSIC-SIS &   XIM-SIS1&XIM-SIS2&XIM-SIS3 \\ \hline 
            \addlinespace
			C-statistic & 0.627 & 0.631 & 0.522 & 0.677 & 0.576 & 0.639 & 0.643 & 0.660 & 0.671 & 0.646 &   0.638 &	0.650 &	0.644 
\\
			SD & 0.091 & 0.082 & 0.050 & 0.075 & 0.081 & 0.089 & 0.089 & 0.084 & 0.080 & 0.084 & 0.091 &	0.086 &	0.090 \\ \bottomrule
		\end{tabular}}
  \parbox{\linewidth}{\footnotesize We choose $M=[\sqrt{n_{tra}}]-1=11$ for XIM-SIS1, $M=[\sqrt{n_{tra}}]=12$ for XIM-SIS2, $M=[\sqrt{n_{tra}}]+1=13$ for XIM-SIS3}
 \end{table}


\begin{table}[htbp]
		\centering
		\caption{The names of the top 20 most frequently selected genes by various methods followed by LASSO for breast cancer data}
		\label{tab: Breast gene lasso}
		\resizebox{\textwidth}{!}{ 
		\begin{tabular}{lllllll}
			\toprule  FAST & CRIS & CCRIS & CSIRS & WSIRS & rCFRS  &rCDCS \\ \hline 
   \addlinespace
			NM\_001168& NM\_001333& Contig48328\_RC& NM\_001333& NM\_001333& Contig38288\_RC &Contig38288\_RC\\
			NM\_001333& Contig38288\_RC& NM\_004418& Contig48270\_RC& U96131& NM\_001333 &NM\_001605\\
			Contig55725\_RC& Contig48270\_RC& NM\_006472& Contig38288\_RC& NM\_006399& NM\_001605 &Contig31288\_RC\\
			Contig56390\_RC& Contig31288\_RC& AF279865& NM\_001605& Contig36879\_RC& Contig31288\_RC &D14678\\
			AL049265& D43950& Contig43806\_RC& NM\_002811& D43950& D14678 &NM\_001333\\
			NM\_003258& D14678& Contig749\_RC& D43950& U82987& D43950 &D43950\\
			NM\_018410& NM\_001605& AB020689& NM\_016359& NM\_006623& NM\_006607 &NM\_003258\\
			NM\_001085& NM\_016359& AB037844& U96131& NM\_012067& NM\_000270 &NM\_003981\\
			NM\_001124& U96131& AB037863& NM\_003295& Contig55188\_RC& Contig48270\_RC &NM\_005733\\
			Contig48328\_RC& Contig36879\_RC& AF052169& D14678& NM\_004418& D38553 &NM\_006607\\
			NM\_004701& D38553& AF167706& Contig31288\_RC& NM\_006579& NM\_003430 &NM\_016359\\
			NM\_005375& M96577& AL110171& AF148505& AB023166& NM\_016359 &D38553\\
			NM\_000125& AL137347& Contig30263\_RC& NM\_003258& Contig31288\_RC& NM\_005733 &Contig48270\_RC\\
			NM\_001809& NM\_000291& Contig36106\_RC& NM\_006461& NM\_001168& NM\_007057 &NM\_000270\\
			NM\_020974& NM\_003258& Contig39556\_RC& NM\_006579& NM\_004358& NM\_016109 &NM\_007057\\
			AL137566& NM\_006096& Contig41530\_RC& Contig36879\_RC& NM\_017522& NM\_003258 &Contig48328\_RC\\
			NM\_001394& Contig57584\_RC& Contig42011\_RC& M96577& Contig55725\_RC& NM\_005689 &NM\_001109\\
			Contig41413\_RC& NM\_001124& Contig43871\_RC& NM\_001124& Contig55771\_RC& NM\_007019 &NM\_005689\\
			Contig53307\_RC& NM\_003981& Contig44064\_RC& NM\_004553& NM\_002426& NM\_018410 &NM\_006096\\
			NM\_005733& NM\_006461& Contig44553\_RC& NM\_005804& NM\_003600& Contig48328\_RC &NM\_007019\\
			\bottomrule 
   \addlinespace\addlinespace
   \toprule  CRSIS & CDCS & HSIC & XIM-SIS1& XIM-SIS2&  XIM-SIS3&\\ \hline 
   \addlinespace 
			NM\_001333& NM\_001333& NM\_002811& Contig31288\_RC& Contig31288\_RC&  Contig38288\_RC
&\\
			NM\_001605& D43950& U96131& NM\_001605& Contig38288\_RC&  NM\_001333
&\\
			Contig58368\_RC& Contig38288\_RC& D43950& Contig38288\_RC& NM\_001605&  Contig31288\_RC
&\\
			Contig38288\_RC& NM\_001605& Contig38288\_RC& Contig58368\_RC& NM\_001333&  NM\_001605
&\\
			Contig31288\_RC& NM\_002811& NM\_003981& NM\_001333& Contig58368\_RC&  Contig58368\_RC
&\\
			D14678& Contig48270\_RC& D14678& D14678& D14678&  D14678
&\\
			D43950& U96131& M96577& NM\_000270& NM\_000270&  NM\_016359
&\\
			NM\_000270& Contig31288\_RC& AF148505& D43950& Contig17359\_RC&  NM\_002689
&\\
			NM\_005689& NM\_016359& NM\_016359& NM\_003981& D43950&  NM\_006096
&\\
			NM\_006607& D14678& Contig31288\_RC& D38553& NM\_002689&  D38553
&\\
			Contig48270\_RC& NM\_003295& Contig48270\_RC& NM\_016359& Contig48270\_RC&  D43950
&\\
			NM\_005733& NM\_003258& NM\_001333& Contig48270\_RC& D38553&  NM\_000270
&\\
			NM\_003981& M96577& NM\_003295& NM\_002689& NM\_003981&  NM\_003981
&\\
			Contig1982\_RC& AF148505& NM\_001168& NM\_006096& NM\_006096&  NM\_005733
&\\
			Contig27386\_RC& Contig55725\_RC& NM\_006607& Contig39891\_RC& NM\_016359&  Contig17359\_RC
&\\
			NM\_001168& NM\_005804& Contig55725\_RC& NM\_000291& NM\_000291&  Contig48270\_RC
&\\
			NM\_002689& NM\_012291& AL049265& NM\_005733& NM\_001168&  NM\_000291
&\\
			NM\_003430& Contig36879\_RC& NM\_012291& NM\_005804& NM\_005733&  Contig57584\_RC
&\\
			NM\_003662& Contig48328\_RC& Contig56390\_RC& U96131& NM\_006558&  NM\_001168
&\\
			NM\_005375& Contig56390\_RC& NM\_001124& Contig17359\_RC& NM\_007057&  NM\_004217&\\
			\bottomrule
		\end{tabular}}
		\parbox{\linewidth}{\footnotesize The genes are ranked by the frequency of screening result.}
	\end{table}

Table \ref{tab: Breast} is the C-statistics for various screening methods under the working Cox model for the breast data after screening. As C-statistics are all large than 0.7, indicating all these methods has strong predictive power. In addition, our proposed method XIM-SIS has the best performance with a C-statistics value 0.799. Table \ref{tab: Breast lasso} is the C-statistics for various screening methods with LASSO, indicating moderate predictive power with values ranging from 0.533 to 0.677. Our proposed method achieves a C-statistic of 0.650, slightly below CSIRS, CRSIS, and CDCS. Table \ref{tab: Breast gene lasso} summarized the top 20 total frequency screened genes with LASSO penalty, totaling 92 unique genes discovered across different methods. Among these,
NM\_001333, Contig31288\_RC, D43950 Contig38288\_RC, Contig48270\_RC, and D14678 stand out as the top six most frequently selected genes, appearing in 12, 11, 11, 10, 10, and 10 approaches, respectively. It should be noted that these six genes are all successfully selected by our method XIM-SIS with top frequency no matter the value of M.  This demonstrates the efficacy of our method in the application field. In addition, although the ranking of features may vary slightly with different selections of M, such slight variations do not impede our method's ability to identify the most significant features. Upon literature review, we found that Contig38288\_RC was previously identified as a predictive gene in \citep{van2002gene}, which is the top three selected gene by our proposed method XIM-SIS for all three M option. This underscores the effectiveness of our screening approach in providing meaningful results.

\section{Conclusion and Discussion}
\label{sec:conc}
This work introduces a straightforward model-free technique for feature screening in ultra-high dimensional censored data, which may find widespread use in contemporary disciplines such as medical research. Simulation studies provide sufficient evidence supporting the effectiveness of our method across diverse practical scenarios, especially in the case with small censoring rate and enough sample numbers. We additionally demonstrate the efficacy of our proposal through an analysis of gene association studies focusing on the survival time of patients with breast cancer. In this paper, our proposed screening method focuses solely on identifying main effects, while detecting interaction effects among predictors and their impact on the response variable is also a crucial step for ultra-high dimensional censored data, which leave space here for future work.

\section*{Acknowledgements}
This study was partially supported by the Guangdong Higher Education Upgrading Plan (2021-2025) of "Rushing to the Top, Making Up Shortcomings and Strengthening Special Features" with No. of 2023KTSCX159, and the Guangdong Provincial Key Laboratory of Interdisciplinary Research and Application for Data Science (project code 2022B1212010006).


\appendix

\section{Proof of Main Theorems}
The appendix contains additional  proofs of the main theorems.

\begin{lemma}[] \label{lemma.1}
	Consider a pair of continuous random variables $(X, Y)$, then for any $\epsilon >0$ and fixed $M$, there exist $N \in \mathds{N}$, such that for all $n>N$, there exists a constant $\Gamma >0$ such that
	\begin{align}
		P\Bigl(\Bigl|\xi_{n,M}(X, Y)-\xi(X, Y)\Bigr| \geqslant \epsilon \Bigr) \leqslant 2 \exp \left(-\Gamma \frac{\epsilon^2 n^2-4\epsilon   }{576 n}\right)  \notag
	\end{align}
\end{lemma}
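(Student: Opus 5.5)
The plan is to split the deviation into a stochastic part and a deterministic bias part:
\[
|\xi_{n,M}-\xi|\le |\xi_{n,M}-E\xi_{n,M}|+|E\xi_{n,M}-\xi|.
\]
The fluctuation term will be handled by McDiarmid's bounded-difference inequality. The bias term will be shown to be smaller than a fixed fraction of $\epsilon$ once $n>N$. This is where the threshold $N$ in the statement comes from.

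\textbf{Bias term.} For continuous $(X,Y)$, \cite{lin2023boosting} showed that $\xi_{n,M}\to\xi$ almost surely and in $L^1$ for fixed $M$. Since $\xi_{n,M}$ is bounded, we also have $E\xi_{n,M}\to\xi$. So for any $\epsilon>0$ there is an $N$ with $|E\xi_{n,M}-\xi|\le \epsilon/2$ for all $n>N$. This reduces the problem to bounding $P(|\xi_{n,M}-E\xi_{n,M}|\ge \epsilon/2)$. Slightly more slack, for instance $|E\xi_{n,M}-\xi|\le \epsilon/2-1/n$ or similar, can be used to generate the lower-order $-4\epsilon$ correction in the exponent.

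\textbf{Bounded differences.} Write $\xi_{n,M}$ as a function of the i.i.d.\ pairs $(X_i,Y_i)$. Replace one pair $(X_l,Y_l)$ by an independent copy and track the effects.
\begin{itemize}
\item Each rank $R_i$ changes by at most $1$ for $i\neq l$, and $R_l$ itself changes by at most $n$.
\item Each $\min\{R_i,R_{j_m(i)}\}$ therefore changes by at most $1$ unless $l$ is involved in the pair.
\item The right-nearest-neighbour structure in $X$ changes only locally. Removing $X_l$ and inserting the new value affects the neighbour lists $j_1(i),\dots,j_M(i)$ only for the $O(M)$ indices $i$ whose $M$ right neighbours in $X$-order contain the old or new position of $l$.
\item Every affected summand $\min\{R_i,R_{j_m(i)}\}$ changes by at most $n$.
\end{itemize}
Hence the total change of $\sum_i\sum_m\min\{\cdot\}$ is at most of order $nM+nM+M^2\cdot n$. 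The terms are:
\begin{itemize}
\item $nM$ from the global rank shifts;
\item $nM$ from the summands with $i=l$;
\item $M^2\cdot n$ from the altered neighbour lists.
\end{itemize}
Dividing by the normaliser $(n+1)\{nM+M(M+1)/4\}\asymp n^2M$ and multiplying by $6$ gives a bounded difference $c_l\le C/n$. Here $C$ is a constant depending only on $M$; with careful bookkeeping it should be a numerical multiple such as $12$ or $24$, which is where the constant $576=24^2$ comes from.

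\textbf{McDiarmid step.} McDiarmid's inequality now gives
\[
P(|\xi_{n,M}-E\xi_{n,M}|\ge t)\le 2\exp\Bigl(-\frac{2t^2}{\sum_l c_l^2}\Bigr)=2\exp\Bigl(-\frac{2t^2 n}{C^2}\Bigr).
\]
Substituting $t=\epsilon/2$ minus the bias slack and absorbing constants into $\Gamma$ yields the stated form $2\exp\{-\Gamma(\epsilon^2n^2-4\epsilon)/(576n)\}$.

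\textbf{Main obstacle.} The delicate step is the bounded-difference count. The neighbour map $j_m(i)$ can be reassigned for several $i$ simultaneously, and ties and the convention $j_m(i)=i$ near the top order statistics must be handled. I would first order the data by $X$ and argue that inserting or deleting one point shifts each neighbour list by at most one position, for at most $M$ values of $i$. Continuity guarantees that there are no ties almost surely. Using the crude bound $n$ per altered summand then keeps the constant explicit.
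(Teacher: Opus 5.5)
Your argument is correct, but it is organized differently from the paper's.

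\textbf{The paper's route.} The paper never centers $\xi_{n,M}$ itself. It passes through the ratio $Q_{n,M}/U_n$ in three steps:
\begin{itemize}
\item It uses the identity $\xi_{n,M}=\frac{4(n-1)}{4n+M+1}\frac{Q_{n,M}}{U_n}-\frac{2(M-1)}{4n+M+1}$ to show $|\xi_{n,M}-Q_{n,M}/U_n|=O(M/n)$ deterministically.
\item It compares $U_n=\frac{1}{6}(1-n^{-2})$ with $U=\frac{1}{6}$, giving a deterministic $n^{-2}$ error.
\item It splits $|Q_{n,M}-Q|$ into a fluctuation part, handled by citing Chatterjee's bounded-difference Lemma A.11, and a bias part, handled by $E(Q_{n,M})\to Q$ from Lin and Han.
\end{itemize}
So $576=24^2$ comes from the chain $\epsilon\to\epsilon/2\to\epsilon/12\to\epsilon/24$, where the factor $6$ is $1/U$. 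The $-4\epsilon$ comes from expanding $(\epsilon n^2-2)^2$, that is, from the $U_n$ versus $U$ discrepancy. It does not come from a bounded-difference constant or from bias slack, as you guessed.

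\textbf{Your route.} You exploit the fact that the normalizer $(n+1)\{nM+M(M+1)/4\}$ of $\xi_{n,M}$ is deterministic. McDiarmid then applies to $\xi_{n,M}$ directly, and both deterministic comparison steps disappear. This also sidesteps the paper's Step L.1 inequality $3(M+3)/(4n+M+1)\le(M+1)/n$, which is false for small $M$, although only $O(M/n)$ is needed there. The price is that you must verify bounded differences for the $M$-neighbour statistic yourself. The paper instead borrows Chatterjee's $M=1$ lemma without checking that it transfers to general $M$.

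\textbf{Your bounded-difference count.} It is sound and can be sharpened. The sum $\sum_{m\le M}\min\{R_i,R_{j_m(i)}\}$ depends only on the (multi)set of right neighbours of $i$, with the self-convention included. Each of the $O(M)$ affected lists changes by at most one element. So the total change is $O(nM)$, and $c_l=O(1/n)$ uniformly in $M$.

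\textbf{The final constant.} You end with $2\exp(-\Gamma'\epsilon^2 n)$. This already implies the stated bound, since $(\epsilon^2n^2-4\epsilon)/(576n)\le\epsilon^2n/576$. There is no need to manufacture the $-4\epsilon$ term. The slack you suggest (bias at most $\epsilon/2-1/n$) would in fact produce a $+\epsilon$ correction rather than a $-4\epsilon$ one. That is harmless, but you should drop that remark.
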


In order to prove Lemma \ref{lemma.1}, we need the following inequality given in Lemma A.11 in \cite{chatterjee2021new}. Let $G(y) \equiv E\{\mathds{1}(Y \geq y)\}$ and $G_X(y) \equiv E\{\mathds{1}(Y \geq y) \mid X\}$. Let $Q(X, Y) \equiv \int \operatorname{var}\left(G_X(y)\right) \mathrm{d} F_Y(y)$ and $U(Y) \equiv \int G(y)(1-G(y)) \mathrm{d} F_Y(y)$. Then according to the definition of $\xi$, we can form $\xi=Q/U$. Following \cite{lin2023boosting}, further denote 
\begin{align}
	& Q_{n, M}(X, Y) \equiv \frac{1}{n M} \sum_{i=1}^n \sum_{m=1}^M \min \left\{F_n\left(Y_i\right) ,F_n\left(Y_{j_m(X_i)}\right)\right\}-\frac{1}{n} \sum_{i=1}^n G_n\left(Y_i\right)^2 \notag \\
	& U_n(Y) \equiv \frac{1}{n} \sum_{i=1}^n G_n\left(Y_i\right)\left(1-G_n\left(Y_i\right)\right)=\frac{1}{6}\left(1-\frac{1}{n^2}\right)\notag
\end{align}
where $F_n(y) \equiv \frac{1}{n} \sum_{i=1}^n \mathds{1}\left(Y_i \leq y\right)$ and $G_n(y) \equiv \frac{1}{n} \sum_{i=1}^n \mathds{1}\left(Y_i \geq y\right)$ are the empirical counterparts of $F$ and $G$, respectively. From the proof of the theorem 2.1 of \cite{lin2023boosting}, we also have 
\begin{align}
	& \frac{Q_{n, M}(X, Y)}{U_n(Y)} =\frac{n+(M+1) / 4}{n-1} \xi_{n, M}(X,Y)+\frac{M-1}{2(n-1)} \notag
\end{align}
that is
\begin{align}
	\xi_{n, M}(X,Y) = \frac{4(n-1)}{4n+(M+1)}\dfrac{Q_{n, M}(X,Y)}{U_n(Y)}-\frac{2(M-1)}{4n+(M+1)} \notag  
\end{align}

\begin{proof}[\textbf{Proof of Lemma \ref{lemma.1}:}]
	\begin{align}
		& P\left(\Bigl|\xi_{n,M}(X, Y)-\xi(X, Y)\Bigr| \geqslant \epsilon \right) \notag \\
		= & P\left(\left|\xi_{n,M}(X, Y)-\dfrac{Q_{n, M}(X,Y)}{U_n(Y)} + \dfrac{Q_{n, M}(X,Y)}{U_n(Y)} - \xi(X, Y)\right| \geqslant \epsilon \right) \notag \\
		\leqslant & P\left(\left|\xi_{n,M}(X, Y)-\dfrac{Q_{n, M}(X,Y)}{U_n(Y)}\right| \geqslant \frac{\epsilon}{2} \right) + P\left(\left|\dfrac{Q_{n, M}(X,Y)}{U_n(Y)}-\xi(X, Y)\right| \geqslant \frac{\epsilon}{2} \right) \notag \\
		\equiv & L_{1}+L_{2}
	\end{align}
	The remaining derivation is to examine $L_{1}$, and $L_{2}$ respectively.
	
	\textbf{Step L.1:} We first examine $L_{1}$. Since $1/n \sum_{i=1}^n G_n\left(Y_i\right)^2 = 1/n\sum_{i=1}^n \left(i/n\right)^2 $, then
	\begin{align}
		Q_{n, M}(X,Y) & =\frac{1}{n M} \sum_{i=1}^n \sum_{m=1}^M \min \left\{F_n\left(Y_i\right) F_n\left(Y_{j_m(X_i)}\right)\right\}-\frac{1}{n} \sum_{i=1}^n G_n\left(Y_i\right)^2 \notag \\
		& \leqslant \frac{1}{n M} \sum_{i=1}^n \sum_{m=1}^M F_n(Y_i)-\frac{(n+1)(2 n+1)}{6 n^2} = \frac{1+n}{2n}-\frac{(n+1)(2 n+1)}{6 n^2} = \frac{n^2-1}{6n^2} \notag \\
		& = U_n(Y) \notag
	\end{align}
	Then
	\begin{align}
		\left\lvert\,\left.\xi_{n,M}(X,Y)-\dfrac{Q_{n, M}(X,Y)}{U_n(Y)} \right\rvert\,\right. = & \left|\frac{4(n-1)}{4n+(M+1)}\dfrac{Q_{n, M}(X,Y)}{U_n(Y)}-\frac{2(M-1)}{4n+(M+1)}-\dfrac{Q_{n, M}(X,Y)}{U_n(Y)}\right| \notag \\
		= & \left|\frac{M+5}{4n+(M+1)} \cdot \dfrac{Q_{n, M}(X,Y)}{U_n(Y)}+\frac{2(M-1)}{4 n+(M+1)}\right| \notag \\
		\leqslant & \left|\frac{M+5}{4n+(M+1)}+\frac{2(M-1)}{4 n+(M+1)}\right| = \left|\frac{3(M+3)}{4n+(M+1)}\right| \leqslant \frac{M+1}{n} \notag
	\end{align}
	Since $\frac{M+1}{n} \rightarrow 0$ as $n \rightarrow \infty$, then there exist $N_1 \in \mathbb{N}$,  for any $n>N_1$,  we have
	\begin{align}
		P\left(\left|\xi_{n,M}(X, Y)-\dfrac{Q_{n, M}(X,Y)}{U_n(Y)}\right| \geqslant \frac{\epsilon}{2} \right) =0 \label{l.1}
	\end{align}
	
	\textbf{Step L.2:} Examine $L_{2}$. From the proof of Theorem 1.1 in \cite{chatterjee2021new},  $S_n \xrightarrow{\text { a.s. }} S$. And since $G(Y)=1-F(Y) \sim$ Uniform $[0,1]$. Thus,
	$$U(Y) \equiv \int G(y)(1-G(y)) \mathrm{d} F_Y(y) =\int F(y)(1-F(y)) \mathrm{d} F_Y(y) = \int_0^1 x(1-x) d x=\frac{1}{6}$$
	Then
	\begin{align}
		\left|\dfrac{Q_{n, M}(X,Y)}{U_n(Y)}-\frac{Q}{U}\right|= & \left|\frac{Q_{n, M}(X,Y)\left(U-U_n(Y)\right)+ U_n(Y) \left(Q_{n, M}(X,Y)-Q\right) }{U_n(Y) U}\right| \notag \\
		\leqslant & \left|\frac{U-U_n(Y)}{U}\right| + \left|\frac{Q_{n, M}(X,Y)-Q}{U}\right| 
		=  \frac{1}{n^2} + 6\left|Q_{n, M}(X,Y)-Q\right| \notag 
	\end{align}
	Therefore, we have
	\begin{align}
		& P\left(\left|\dfrac{Q_{n, M}(X,Y)}{U_n(Y)}-\frac{Q}{U}\right| \geqslant \frac{\epsilon}{2} \right) \notag \\
		\leqslant & P\left( 6\left|Q_{n, M}(X,Y)-Q\right| \geqslant \frac{\epsilon}{2} - \frac{1}{n^2} \right)
		=  P\left( \left|Q_{n, M}(X,Y)-Q\right| \geqslant \frac{\epsilon n^2-2}{12 n^2} \right) \notag \\    
		\leqslant & P\left(\left|Q_{n, M}(X, Y)-E\left(Q_{n, M}(X, Y)\right)\right| \geqslant \frac{\epsilon n^2-2}{24 n^2} \right) + P\left(\left|E\left(Q_{n, M}(X, Y)\right) -Q\right| \geqslant \frac{\epsilon n^2-2}{24 n^2} \right) \notag \\
		\leqslant & 2 \exp \left(-\Gamma \frac{\epsilon^2 n^4-4\epsilon n^2 +4  }{576 n^3}\right) \notag \\  
		\leqslant & 2 \exp \left(-\Gamma \frac{\epsilon^2 n^2-4\epsilon   }{576 n}\right)
		\label{l.2}
	\end{align}
	The last inequality is given from the proof of Theorem 1 in \cite{lin2023boosting}, 
	$\lim _{n \rightarrow \infty} E\left(Q_{n, M}\right)=Q$
	and  the Lemma A.11 in \cite{chatterjee2021new}, essentially bounded difference inequality, there exists a constant $\Gamma >0$ such that for any $n$ and $t>0$,
	$$
	P\left(\left|Q_{n, M}-E\left(Q_{n, M}\right)\right| \geq t\right) \leqslant 2 \exp \left(-\Gamma n t^2\right) 
	$$
	Combine \ref{l.1} and \ref{l.2}, it is clear to see 
	\begin{align}
		P\Bigl(\Bigl|\xi_{n,M}(X, Y)-\xi(X, Y)\Bigr| \geqslant \epsilon \Bigr) \leqslant & 2 \exp \left(-\Gamma \frac{\epsilon^2 n^2-4\epsilon   }{576 n}\right) \notag 
	\end{align}
	Which end the proof of  lemma \ref{lemma.1}.
\end{proof}

\begin{lemma}[] \label{lemma.2}
	Let $\left\{T_i\right\}$ and $\left\{Z_i\right\}$ be no tie sequences of independently identically distributed nonnegative random variables with distribution functions $F(T)$ and $F(Z)$, respectively. Let $\hat{F}(Z)$ be the empirical cumulative distribution function of $\left\{Z_i\right\}$, $S(T) = 1 - F(T)$, and $\hat{S}(T)$ be the Kaplan-Meier estimator of $S(T)$. Then 
	\begin{align}
		\xi_{n,M}\left(\hat{F}(Z), \hat{S}(T)\right) = \xi_{n,M}\left(F(Z), S(T)\right), \quad \xi_{n,M}\left(\hat{S}(T), \hat{F}(Z)\right) = \xi_{n,M}\left(S(T), F(Z)\right).
	\end{align}
\end{lemma}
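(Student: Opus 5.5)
The key observation is that $\xi_{n,M}$ in (\ref{revisedxi}) depends on the sample $\left[(U_i,V_i)\right]_{i=1}^n$ only through two things: the ranks $R_i=\sum_j \1(V_j\le V_i)$ of the second coordinates, and the right-nearest-neighbour indices $j_m(i)$, which are determined by the ordering of the first coordinates. Hence $\xi_{n,M}$ is invariant under any transformation $U_i\mapsto g(U_i)$, $V_i\mapsto h(V_i)$ with $g,h$ strictly increasing on the sample points. It is also invariant if both $g$ and $h$ are strictly decreasing, as long as each reverses the order in a consistent way that leaves the definition of $R_i$ and $j_m(i)$ unchanged up to the same relabelling. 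The plan is therefore to show that, on the observed samples, $\hat F$ induces the same ordering of $\{Z_i\}$ as $F$, and $\hat S$ induces the same ordering of $\{T_i\}$ as $S$. Both equalities then follow from the same rank-invariance argument.

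\textbf{Step 1: the $Z$ coordinate.} With no ties, $\hat F(Z_i)=\frac1n\sum_j\1(Z_j\le Z_i)$ equals (rank of $Z_i$)$/n$, which is strictly increasing in $Z_i$. By continuity, $F$ is nondecreasing, and almost surely it is strictly increasing on the sample points; here I would invoke "no ties" in the sense that $F(Z_i)$ are a.s. distinct. So $\hat F(Z_i)$ and $F(Z_i)$ give the same ranking of indices.

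\textbf{Step 2: the $T$ coordinate.} $\hat S$ is a step function, nonincreasing in $t$. At distinct observed event times each factor $1-\Delta_i/\#\{j:Y_j\ge Y_i\}$ is strictly less than $1$ whenever $\Delta_i=1$. Thus $\hat S(T_i)$ is strictly decreasing along the ordered event times, and $S(T_i)$ is a.s. strictly decreasing by continuity. Both maps therefore reverse the order of the $T_i$ identically, giving identical rank vectors $R_i$ when they are in the $V$ slot, and identical neighbour sets $j_m(i)$ when they are in the $U$ slot.

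\textbf{Main obstacle.} The delicate point is Step 2. $\hat S$ is constant between uncensored times and can reach $0$, so strict monotonicity only holds if every $T_i$ is uncensored or the last factor behaves well. I would handle this by reading the lemma as applying to the sample $\{T_i\}$ of event times (as the hypothesis "no tie" suggests) and checking that $\hat S$ has a jump at each $T_i$. If censored points intervene, I would instead argue that ties created by $\hat S$ among censored observations do not change $R_i$ or $j_m(i)$ under the stated no-tie convention. This is the step I would verify most carefully, possibly by restricting to the a.s. event where all relevant values are distinct.
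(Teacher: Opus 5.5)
Your plan is the same as the paper's. $\xi_{n,M}$ sees the first coordinates only through the indices $j_m(i)$ and the second only through the ranks $R_i$, so it suffices that $\hat F$ and $F$ order the $Z_i$ alike and that $\hat S$ and $S$ order the $T_i$ alike. Step 1 is fine.

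The gap is Step 2, and more careful checking will not close it: under censoring the identity is false. $\hat S$ jumps only at uncensored times, so it takes the same value at any $T_i<T_j$ with no uncensored observation in $(T_i,T_j]$. Concretely, take $n=3$, $M=1$ and $T_1<T_2<T_3$. Let subjects $1$ and $3$ be uncensored and subject $2$ censored at $C_2\in(T_1,T_2)$. Then $\hat S(T_1)=\hat S(T_2)=2/3$ and $\hat S(T_3)=0$; the same values arise if $\hat S$ is evaluated at the $Y_i$. So $V=\hat S(T)$ gives $R=(3,3,1)$, whereas $V=S(T)$ gives $R=(3,2,1)$. If $Z_3<Z_1<Z_2$, then $j_1(3)=1$, $j_1(1)=2$, $j_1(2)=2$. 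Hence $\sum_i\min\{R_i,R_{j_1(i)}\}$ equals $7$ in the first case and $5$ in the second, i.e.\ $\xi_{3,1}(\hat F(Z),\hat S(T))=1\neq 1/7=\xi_{3,1}(F(Z),S(T))$.

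The same example refutes your fallback. Ties do change $R_i$, because $R_i=\sum_j\1(V_j\le V_i)$ gives tied values their common maximal rank. Ties in the first slot also change the counts that define $j_m(i)$: with $U=\hat S(T)$ one gets $j_1(3)=3$ instead of $2$. The no-tie hypothesis on the $T_i$ says nothing about ties among the $\hat S(T_i)$.

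The paper's proof meets exactly this obstacle and handles it by ``breaking ties uniformly at random'', which does not work either. In the example, breaking the tie as $V_1<V_2$ gives $R=(2,3,1)$ and the sum $6$. So the asserted equalities $F_n(\hat S(t))=F_n(S(t))$ and $G_n(\hat S(t))=G_n(S(t))$, and hence the identity, hold only with probability $1/2$. The only tie-breaking that restores them uses the order of the $T_i$ themselves. That order is unobservable for censored subjects, and using it makes $\hat S$ redundant.

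So what your argument (and the paper's) actually proves is your option (a): the identity holds when $\hat S$ is strictly decreasing on $\{T_i\}$. That forces every $T_i$ except possibly the smallest to be uncensored, which is essentially the uncensored case where $\hat S$ is the empirical survival function. With genuine censoring the lemma cannot be established as an exact identity. Any usable substitute, such as a bound on the effect of the ties, would need a different argument.
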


\begin{proof}[\textbf{Proof of Lemma \ref{lemma.2}:}]
	We first prove $\xi_{n,M}\left(\hat{F}(Z), \hat{S}(T)\right) = \xi_{n,M}\left(F(Z), \hat{S}(T)\right)$. Since $\left\{Z_i\right\}$ is a no tie sequences, then for any integer $i, j < n$, we have 
	\begin{align}
		\sum_{l=1}^n \mathds{1}\left(\hat{F}(Z_{i})<\hat{F}(Z_{l}) \leqslant \hat{F}(Z_{j})\right) = \sum_{l=1}^n \mathds{1}\left(Z_{i}<Z_{l} \leqslant Z_{j}\right) = \sum_{l=1}^n \mathds{1}\left(F(Z_{i})<F(Z_{l}) \leqslant F(Z_{j})\right) \notag
	\end{align}
	that is $j_m\left(\hat{F}(Z_i)\right)= j_m\left(F(Z_i)\right)$. Then
	\begin{align}
		Q_{n, M}\left(\hat{F}(Z), \hat{S}(T)\right) \equiv 
		& \frac{1}{n M} \sum_{i=1}^n \sum_{m=1}^M  \min \left\{F_n\left(\hat{S}\left(T_i\right)\right), F_n\left(\hat{S}\left(T_{j_m\left(\hat{F}\left(Z_i\right)\right)}\right)\right)\right\} -\frac{1}{n} \sum_{i=1}^n G_n\left(\hat{S}(T_i)\right)^2 \notag \\
		= & \frac{1}{n M} \sum_{i=1}^n \sum_{m=1}^M  \min \left\{F_n\left(\hat{S}\left(T_i\right)\right), F_n\left(\hat{S}\left(T_{j_m\left(F\left(Z_i\right)\right)}\right)\right)\right\} -\frac{1}{n} \sum_{i=1}^n G_n\left(\hat{S}(T_i)\right)^2 \notag \\
		= & Q_{n, M}\left(F(Z), \hat{S}(T)\right) \notag
	\end{align}
	According to the definition of $\xi_{n,M}$, we have
	\begin{align}
		\xi_{n,M}\left(\hat{F}(Z), \hat{S}(T)\right) =\dfrac{Q_{n, M}\left(\hat{F}(Z), \hat{S}(T)\right)}{U_n\left(\hat{S}(T)\right)}=\dfrac{Q_{n, M}\left(F(Z), \hat{S}(T)\right)}{U_n\left(\hat{S}(T)\right)}= \xi_{n,M}\left(F(Z), \hat{S}(T)\right) \notag
	\end{align}
	
	Next, we prove $\xi_{n,M}\left(F(Z), \hat{S}(T)\right) = \xi_{n,M}\left(F(Z), S(T)\right)$. Obviously, $\hat{S}(T)$ is with tie, then we breaking ties uniformly at random, which means $\hat{S}(T)$ can also be seen as strictly monotonically decreasing. Then according to the definition of $G_n(\cdot)$ and $F_n(\cdot)$, we have 
	\begin{align}
		& G_n\left(\hat{S}(t)\right) \equiv \frac{1}{n} \sum_{i=1}^n \mathds{1}\left(\hat{S}(T_i) \geq \hat{S}(t)\right), \quad  F_n\left(\hat{S}(t)\right) \equiv \frac{1}{n} \sum_{i=1}^n \mathds{1}\left(\hat{S}(T_i) \leq \hat{S}(t)\right) \notag \\
		& G_n\left(S(t)\right) \equiv \frac{1}{n} \sum_{i=1}^n \mathds{1}\left(S(T_i) \geq S(t)\right), \quad \quad F_n\left(S(t)\right) \equiv \frac{1}{n} \sum_{i=1}^n \mathds{1}\left(S(T_i) \leq S(t)\right) \notag 
	\end{align}
	that is $G_n\left(\hat{S}(t)\right) =  G_n\left(S(t)\right)$ and $F_n\left(\hat{S}(t)\right) = F_n\left(S(t)\right)$. Then
	\begin{align}
		Q_{n, M}\left(F(Z), \hat{S}(T)\right) \equiv 
		& \frac{1}{n M} \sum_{i=1}^n \sum_{m=1}^M  \min \left\{F_n\left(\hat{S}\left(T_i\right)\right), F_n\left(\hat{S}\left(T_{j_m\left(F\left(Z_i\right)\right)}\right)\right)\right\} -\frac{1}{n} \sum_{i=1}^n G_n\left(\hat{S}(T_i)\right)^2 \notag \\
		= & \frac{1}{n M} \sum_{i=1}^n \sum_{m=1}^M  \min \left\{F_n\left(S\left(T_i\right)\right), F_n\left(S\left(T_{j_m\left(F\left(Z_i\right)\right)}\right)\right)\right\} -\frac{1}{n} \sum_{i=1}^n G_n\left(S(T_i)\right)^2 \notag \\
		= & Q_{n, M}\left(F(Z), S(T)\right) \notag
	\end{align}
	Then according to the definition of $U(\cdot)$, we can also have
	\begin{align}
		U_n\left(\hat{S}(T)\right)
		& = \frac{1}{n} \sum_{i=1}^n G_n\left(\hat{S}(T_i)\right)\left(1-G_n\left(\hat{S}(T_i)\right)\right) \notag \\
		& = \frac{1}{n} \sum_{i=1}^n G_n\left(S(T_i)\right)\left(1-G_n\left(S(T_i)\right)\right)
		= U_n\left(S(T)\right) \notag
	\end{align}
	Then
	\begin{align}
		\xi_{n,M}\left(F(Z), \hat{S}(T)\right) = \dfrac{Q_{n, M}\left(F(Z), \hat{S}(T)\right)}{U_n\left(\hat{S}(T)\right)} = \dfrac{Q_{n, M}\left(F(Z), S(T)\right)}{U_n\left(S(T)\right)} = \xi_{n,M}\left(F(Z), S(T)\right) \notag
	\end{align}
	Thus we proved $\xi_{n,M}\left(\hat{F}(Z), \hat{S}(T)\right) = \xi_{n,M}\left(F(Z), \hat{S}(T)\right) = \xi_{n,M}\left(F(Z), S(T)\right)$.
	
	Next we prove, $\xi_{n,M}\left(\hat{S}(T), \hat{F}(Z)\right) = \xi_{n,M}\left(S(T), F(Z)\right)$. Since for any integer $i, j < n$
	\begin{align}
		\sum_{l=1}^n \mathds{1}\left(\hat{S}(T_{i})<\hat{S}(T_{l}) \leqslant \hat{S}(T_{j})\right) = \sum_{l=1}^n \mathds{1}\left(T_{i}<T_{l} \leqslant T_{j}\right) = \sum_{l=1}^n \mathds{1}\left(S(T_{i})<S(T_{l}) \leqslant S(T_{j})\right) \notag
	\end{align}
	we have $j_m\left(\hat{S}(T_i)\right)= j_m\left(S(T_i)\right)$. And since
	\begin{align}
		& G_n\left(\hat{F}(z)\right) \equiv \frac{1}{n} \sum_{i=1}^n \mathds{1}\left(\hat{F}(z_i) \geq \hat{F}(z)\right), \quad  F_n\left(\hat{F}(z)\right) \equiv \frac{1}{n} \sum_{i=1}^n \mathds{1}\left(\hat{F}(Z_i) \leq \hat{F}(z)\right) \notag \\
		& G_n\left(F(z)\right) \equiv \frac{1}{n} \sum_{i=1}^n \mathds{1}\left(F(Z_i) \geq F(z)\right), \quad \quad F_n\left(F(z)\right) \equiv \frac{1}{n} \sum_{i=1}^n \mathds{1}\left(F(Z_i) \leq F(z)\right) \notag 
	\end{align}
	we have $G_n\left(\hat{F}(z)\right) =  G_n\left(F(z)\right)$ and $F_n\left(\hat{F}(z)\right) = F_n\left(F(z)\right)$. Then
	\begin{align}
		Q_{n, M}\left(\hat{S}(T), \hat{F}(Z)\right) \equiv 
		& \frac{1}{n M} \sum_{i=1}^n \sum_{m=1}^M  \min \left\{F_n\left(\hat{F}\left(Z_i\right)\right), F_n\left(\hat{F}\left(Z_{j_m\left(\hat{S}\left(T_i\right)\right)}\right)\right)\right\} -\frac{1}{n} \sum_{i=1}^n G_n\left(\hat{F}(Z_i)\right)^2 \notag \\
		= & \frac{1}{n M} \sum_{i=1}^n \sum_{m=1}^M  \min \left\{F_n\left(F\left(Z_i\right)\right), F_n\left(F\left(Z_{j_m\left(S\left(T_i\right)\right)}\right)\right)\right\} -\frac{1}{n} \sum_{i=1}^n G_n\left(F(Z_i)\right)^2 \notag \\
		= & Q_{n, M}\left(S(T), F(Z)\right) \notag
	\end{align}
	And since
	\begin{align}
		U_n\left(\hat{F}(Z)\right)
		& = \frac{1}{n} \sum_{i=1}^n G_n\left(\hat{F}(Z_i)\right)\left(1-G_n\left(\hat{F}(Z_i)\right)\right) \notag \\
		& = \frac{1}{n} \sum_{i=1}^n G_n\left(F(Z_i)\right)\left(1-G_n\left(F(Z_i)\right)\right)
		= U_n\left(F(Z)\right) \notag
	\end{align}
	Then
	\begin{align}
		\xi_{n,M}\left(\hat{S}(T), \hat{F}(Z)\right) = \dfrac{Q_{n, M}\left(\hat{S}(T), \hat{F}(Z)\right)}{U_n\left(\hat{F}(Z)\right)} = \dfrac{Q_{n, M}\left(S(T), F(Z)\right)}{U_n\left(F(Z)\right)} = \xi_{n,M}\left(S(T), F(Z)\right) \notag
	\end{align}
	which end the proof.
	
\end{proof}


\begin{proof}[\textbf{Proof of the Theorem \ref{Sure screening property}:}]
	We first prove
	\begin{align}
		P\left(\left|\hat{\omega}_k-\omega_k\right| \geqslant \epsilon \right) \leqslant
		8 \exp \left(-\Gamma_3 \frac{\epsilon^2 n^2-4\epsilon   }{576 n}\right) \notag
	\end{align}
	
	In the following part, for writing convincing, $\xi_{n, M}\left(\hat{F}_{X_k}\left(X_k\right), \hat{S}(T)\right)$ shorthand as $\xi_{n, M}(\hat{F}_k, \hat{S})$, and $\hat{F}_{X_k}\left(X_k\right)$ and $\hat{S}(T)$ shorthand as $\hat{F}_k$ and $\hat{S}$ respectively. Since
	\begin{align}
		\hat{w}_k \equiv & \max\Bigl\{\xi_{n,M}(\hat{F}_k, \hat{S}), \ \xi_{n,M}(\hat{S}, \hat{F}_k) \Bigr\} \notag \\
		\equiv & \frac{1}{2}\Bigl(\xi_{n,M}(\hat{F}_k, \hat{S}) + \xi_{n,M}(\hat{S}, \hat{F}_k) + |\xi_{n,M}(\hat{F}_k, \hat{S}) - \xi_{n,M}(\hat{S}, \hat{F}_k)|\Bigr) , \quad k=1, \ldots, p \notag \\
		w_k \equiv & \max\Bigl\{\xi({F}_k, {S}), \xi({S}, {F}_k) \Bigr\} \notag \\ 
		\equiv & \frac{1}{2}\Bigl(\xi({F}_k,{S}) + \xi({S}, {F}_k) + |\xi({F}_k, {S}) - \xi({S}, {F}_k)|\Bigr) , \quad k=1, \ldots, p \notag
	\end{align}
	Then
	\begin{align}
		& P\left(\left|\hat{\omega}_k-\omega_k\right| \geqslant \epsilon \right) \notag \\ 
		= & P\left(\left|\max\Bigl\{\xi_{n,M}(\hat{F}_k, \hat{S}), \ \xi_{n,M}(\hat{S}, \hat{F}_k) \Bigr\} - \max\Bigl\{\xi({F}_k, {S}), \xi({S}, {F}_k) \Bigr\} \right| \geqslant \epsilon \right) \notag \\
		= & P\Bigl( \Bigl| \xi_{n,M}(\hat{F}_k, \hat{S}) + \xi_{n,M}(\hat{S}, \hat{F}_k) + |\xi_{n,M}(\hat{F}_k, \hat{S}) - \xi_{n,M}(\hat{S}, \hat{F}_k)|  \notag \\
		& \quad - \xi({F}_k,{S}) - \xi({S}, {F}_k) - |\xi({F}_k, {S}) - \xi({S}, {F}_k)| \Bigr| \geqslant 2\epsilon \Bigr) \notag \\
		\leqslant & P\left(\left|\xi_{n,M}(\hat{F}_k, \hat{S})-\xi({F}_k,{S})\right| \geqslant \frac{2\epsilon}{3} \right) + P\left(\left|\xi_{n,M}(\hat{S}, \hat{F}_k)-\xi({S}, {F}_k)\right| \geqslant \frac{2\epsilon}{3} \right) \notag \\
		& \quad + P\left(\left||\xi_{n,M}(\hat{F}_k, \hat{S}) - \xi_{n,M}(\hat{S}, \hat{F}_k)|-|\xi({F}_k, {S}) - \xi({S}, {F}_k)|\right| \geqslant \frac{2\epsilon}{3} \right) \notag \\
		\leqslant & P\left(\left|\xi_{n,M}(\hat{F}_k, \hat{S})-\xi({F}_k,{S})\right| \geqslant \frac{2\epsilon}{3} \right) + P\left(\left|\xi_{n,M}(\hat{S}, \hat{F}_k)-\xi({S}, {F}_k)\right| \geqslant \frac{2\epsilon}{3} \right) \notag \\
		& \quad + P\left(\left|\xi_{n,M}(\hat{F}_k, \hat{S}) - \xi_{n,M}(\hat{S}, \hat{F}_k)-\xi({F}_k, {S}) + \xi({S}, {F}_k)\right| \geqslant \frac{2\epsilon}{3} \right) \notag \\
		\leqslant & P\left(\left|\xi_{n,M}(\hat{F}_k, \hat{S})-\xi({F}_k,{S})\right| \geqslant \frac{2\epsilon}{3} \right) + P\left(\left|\xi_{n,M}(\hat{S}, \hat{F}_k-\xi({S}, {F}_k)\right| \geqslant \frac{2\epsilon}{3} \right) \notag \\
		& \quad + P\left(\left|\xi_{n,M}(\hat{F}_k, \hat{S})-\xi({F}_k, {S}) \right| \geqslant \frac{\epsilon}{3} \right)
		+ P\left(\left|\xi_{n,M}(\hat{S}, \hat{F}_k)-\xi({S}, {F}_k) \right| \geqslant \frac{\epsilon}{3} \right) \notag \\
		\leqslant &  2P\left(\left|\xi_{n,M}(\hat{F}_k, \hat{S})-\xi({F}_k, {S}) \right| \geqslant \frac{\epsilon}{3} \right)+ 2P\left(\left|\xi_{n,M}(\hat{S}, \hat{F}_k)-\xi({S}, {F}_k) \right| \geqslant \frac{\epsilon}{3} \right) \notag \\
		= &  2P\left(\left|\xi_{n,M}(F_k, S)-\xi(F_k, S) \right| \geqslant \frac{\epsilon}{3} \right)+ 2P\left(\left|\xi_{n,M}(S, F_k)-\xi(S, F_k) \right| \geqslant \frac{\epsilon}{3} \right) \notag \\
		\leqslant & 2\cdot 2 \exp \left(-\Gamma_1 \frac{\epsilon^2 n^2-4\epsilon   }{576 n}\right)+ 2\cdot 2 \exp \left(-\Gamma_2 \frac{\epsilon^2 n^2-4\epsilon   }{576 n}\right) \notag \\
		= & 8 \exp \left(-\Gamma_3 \frac{\epsilon^2 n^2-4\epsilon   }{576 n}\right)\notag
	\end{align}
	where $\Gamma_3=\min \{\Gamma_1, \Gamma_2\}$. The first and the third inequality holds due to the triangle inequality $|x \pm y| \leqslant |x|+|y|$, the second inequality holds due to the reverse triangle inequality $||x|-|y|| \leqslant |x-y|$. The last second equality holds due to lemma \ref{lemma.2} and the last inequality holds due to lemma \ref{lemma.1}
	
	Let $\epsilon=c n^{-\kappa}$, where $\kappa$ satisfies $0 \leqslant \kappa<1 / 2$. We thus have
	\begin{align}
		& P\left(\max _{1 \leqslant k \leqslant p}\left|\hat{\omega}_k-\omega_k\right| \geqslant c n^{-\kappa}\right) 
		\leqslant p \max _{1 \leqslant k \leqslant p} P\left(\left|\hat{\omega}_k-\omega_k\right| \geqslant c n^{-\kappa} \right) \notag \\
		\leqslant & 8p \exp \left(  -\frac{\Gamma_3 c^2 n^{1-2\kappa}}{576}  + \frac{ \Gamma_3 c n^{-\kappa-1} }{144} \right) 
		\leqslant 8p \exp \left(  -\frac{\Gamma_3 c^2 n^{1-2\kappa}}{576}  + \frac{ \Gamma_3 c }{144} \right) \notag \\
		= & 8 \exp \left( \frac{ \Gamma_3 c }{144} \right) p \exp \left(-\frac{\Gamma_3 c^2 n^{1-2\kappa}}{576} \right) 
		=  O\left(p \exp \left\{-\Gamma n^{1-2 \kappa}\right\}\right) \notag
	\end{align}
	where $\Gamma = \Gamma_3 c^2/576$.
	
	Thus the first part of Theorem \ref{Sure screening property} is proved. To prove the second part of Theorem \ref{Sure screening property}, we consider the event
	$$
	\mathcal{B}=\left\{\max _{k \in \mathcal{A}}\left|\hat{\omega}_k-\omega_k\right| \leqslant c n^{-\kappa}\right\} .
	$$
	Since for all $k \in \mathcal{A}$, the condition (C1) ensures that $\omega_k \geqslant 2 c n^{-\kappa}$, then for event $\mathcal{B}$, we have for all $k \in \mathcal{A}$, $\hat{\omega}_k \geqslant c n^{-\kappa}$. Hence we have $\mathcal{A} \subseteq \hat{\mathcal{A}}$. Thus we get
	\begin{align}
		P(\mathcal{A} \subseteq \hat{\mathcal{A}}) \geqslant P(\mathcal{B})=1-P\left(\max _{k \in \mathcal{A}}\left|\hat{\omega}_k-\omega_k\right|>c n^{-\kappa}\right)= 1-O\left(s \exp \left\{-\Gamma n^{1-2 \kappa}\right\}\right) \notag
	\end{align}
\end{proof}

\begin{proof}[\textbf{Proof of the Theorem \ref{Controlling false discoveries}:}]
	We consider the set
	$\mathcal{C}=\left\{\max _{1 \leqslant k \leqslant p}\left|\hat{\omega}_k-\omega_k\right| \leqslant 2^{-1} c n^{-\kappa}\right\}$
	on the set $\mathcal{C}$, the number of $\left\{k: \hat{\omega}_k \geqslant c n^{-\kappa}\right\}$ cannot exceed the number of $\left\{k: \omega_k \geqslant 2^{-1} c n^{-\kappa}\right\}$.
	
	Further note that the number of $\left\{k: \omega_k \geqslant 2^{-1} c n^{-\kappa}\right\}$ cannot exceed $2 c^{-1} n^\kappa \sum_k \omega_k$. Otherwise, we have
	$$
	\sum_k \omega_k \geqslant\left(1+2 c^{-1} n^\kappa \sum_k \omega_k\right) \times \frac{c}{2} n^{-\kappa}>\sum_k \omega_k .
	$$    
	Thus, for $\hat{\mathcal{A}}=\left\{k: \hat{\omega}_k \geqslant c n^{-\kappa} \text {, for } 1 \leqslant k \leqslant p\right\}$, we have
	$$
	P\left(|\hat{\mathcal{A}}| \leqslant 2 c^{-1} n^\kappa \sum_k \omega_k\right) \geqslant P(\mathcal{C}) \geqslant 1-O\left(p \exp \left(-\Gamma n^{1-2 \kappa}\right)\right) .
	$$
\end{proof}


\begin{proof}[\textbf{Proof of the Theorem \ref{Theorem 3}:}]
	Recalling the condition (C2), $\min _{k \in \mathcal{A}} \omega_k-\max _{k \in \mathcal{I}} \omega_k=m>0$. Thus we have
	\begin{align}
		& P\left(\min _{k \in \mathcal{A}} \hat{\omega}_k \leqslant \max _{k \in \mathcal{I}} \hat{\omega}_k\right) \notag \\
		= & P\left(\max _{k \in \mathcal{I}} \hat{\omega}_k - \min _{k \in \mathcal{A}} \hat{\omega}_k + m \geqslant m \right) \notag \\
		= & P\left(\left[\max _{k \in \mathcal{I}} \hat{\omega}_k-\max _{k \in \mathcal{I}} \omega_k\right]-\left[\min _{k \in \mathcal{A}} \hat{\omega}_k-\min _{k \in \mathcal{A}} \omega_k\right] \geqslant m \right) \notag \\
		\leqslant & P \left(\max _{k \in \mathcal{I}}\left|\hat{\omega}_k-\omega_k\right|+\max _{k \in \mathcal{A}}\left|\hat{\omega}_k-\omega_k\right| \geqslant m \right) \notag \\ 
		\leqslant & P\left(\max _{1 \leqslant k \leqslant p}\left|\hat{\omega}_k-\omega_k\right| \geqslant \frac{m}{2}\right) \leqslant O\left(p \exp \left\{-\Gamma n m^2\right\}\right) \notag
	\end{align}
	This completes the proof.
\end{proof}

\newpage




 \bibliographystyle{elsarticle-harv} 
 \bibliography{references_xsis}



\end{document}